\documentclass[journal,draftcls,onecolumn,12pt,twoside]{IEEEtran}

\usepackage{makecell}
\usepackage[utf8]{inputenc} 
\usepackage[T1]{fontenc}
\usepackage{url}
\usepackage{ifthen}

\usepackage{times}
\usepackage[cmex10]{amsmath}
\usepackage{amssymb}
\usepackage{amsthm}
\usepackage{color}
\usepackage{algorithm}
\usepackage{diagbox}
\usepackage{graphicx}
\usepackage{subfig}
\usepackage{multirow}
\usepackage[bookmarks=false,colorlinks=false,pdfborder={0 0 0}]{hyperref}
\usepackage{cite}
\usepackage{bm}
\usepackage{arydshln}
\usepackage{mathtools}
\usepackage{microtype}
\usepackage{algorithm}
\usepackage{algorithmic}

\newcommand \rank {{\sf rank}}

\newtheorem{theorem}{Theorem}

\newtheorem{lemma}[theorem]{Lemma}

\long\def\symbolfootnote[#1]#2{\begingroup
\def\thefootnote{\fnsymbol{footnote}}\footnote[#1]{#2}\endgroup}
\renewcommand{\paragraph}[1]{{\bf #1}}
\title{Tight Lower Bounds on The Single-Error Detection Threshold for Analog Error-Correcting Codes}

\author{Zhengyi Jiang, Wenhao Liu, Zhongyi Huang, Bo Bai, Gong Zhang, Hanxu Hou}

\begin{document}
\let\emph\textit
\maketitle
\pagestyle{empty}  
\thispagestyle{empty} 
\symbolfootnote[0]{
Z. Jiang, B. Bai and G. Zhang are with the Theory Lab, Central Research Institute, 2012 Labs, Huawei Tech. Co. Ltd., Hong Kong SAR~(E-mail: jzy10492761962@163.com, baibo8@huawei.com, nicholas.zhang@huawei.com).
W. Liu and Z. Huang are with the Department of Mathematics Sciences, Tsinghua University, Beijing, China~(E-mail:liuwenhao@mails.tsinghua.edu.cn, zhongyih@tsinghua.edu.cn). 
H. Hou is with the Shenzhen University of Advanced Technology~(E-mail: houhanxu@163.com). \emph{(Corresponding author: Hanxu Hou.)}

This work was partially supported by the National Key R\&D Program of
China (No. 2025YFA1017200), the National Natural Science Foundation of China (No. 62371411, 61901115, 12025104).
}
\begin{abstract}
Analog error-correcting codes (Analog ECCs) for approximate vector-matrix multiplication have been extensively studied as means to achieve fault-tolerant in-memory computation.
The theoretical foundations for such coding schemes, particularly the characterization of their correction capabilities via the height profile, have been well established in recent literature.
In this paper, we focus on the case of single-error detection Analog ECCs. 
Among several open problems related to this case proposed by Ron M. Roth in \cite{AECC2020}, Problem 1 asks: 

"Identify the values of $k$ and $n$ for which every linear $[n, k]$ code $\mathcal{C}$ over $\mathbb{R}$ satisfies:
$$\mathsf{h}_1(\mathcal{C}):=\max_{\boldsymbol{c}\in \mathcal{C}\setminus{\{\boldsymbol{0}\}}}\mathsf{h}_1(\boldsymbol{c})\geq \Big\lceil \frac{k}{n-k} \Big\rceil.\text{"}$$ 
Here, for any $\boldsymbol{x}\in\mathbb{R}^n$, $\mathsf{h}_1(\boldsymbol{x})$ represents the  ratio between the largest and second largest absolute values of $\boldsymbol{x}$'s entries.

As the simplest special case of Problem 1 (with $n-k=2$), the following problem was posed as Problem 2 in [1]:

"Must every $(n-2)$-dimensional subspace of $\mathbb{R}^n$, $n$ even, contain a nonzero vector in which the ratio between the largest and second largest absolute values of its entries is at least $(n/2)-1$?"

These problems directly pertain to the lower bounds on the single-error detection threshold for Analog ECCs: Problem 1 corresponds to arbitrary $n-k$ and Problem 2 corresponds to $n-k=2$.
In this paper, we provide an affirmative answer to Problem 2 and a rigorous proof using theories related to convex optimization. Furthermore, we extend our analytical method to show that the lower bound in Problem 1 is tight for the case where $n-k$ divides $k$.
Our results fill the gap in the lower bound theory of thresholds for single-error detection in Analog ECCs.

\end{abstract}

\section{Introduction}
\label{sec:1}
Error-resilient computation over the real numbers has emerged as a fundamental requirement in modern machine learning and scientific computing. 
In particular, matrix-vector multiplication-the core primitive underlying neural network inference, large language models (LLMs), and numerous numerical algorithms \cite{Goodfellow-et-al-2016,hu2016dot,104196,sebastian2020memory,zhang2023edge,50305}-demands robustness against non-ideal execution conditions such as noise, quantization error, and device-level variability.

By embedding structured redundancy into the computed matrices, one can detect and correct large-magnitude errors (outliers) while tolerating small-scale perturbations inherent to finite-precision arithmetic or approximate computing paradigms \cite{1676475,jou1984fault,liang2025attnchecker}. Such fault tolerance is of growing importance as model scales increase and computational resources become heterogeneous and potentially unreliable.

A rigorous foundation for this line of inquiry was recently laid by Roth \cite{AECC2019,AECC2020}, who proposed the framework of analog error-correcting codes (Analog ECCs). 
In this framework, a prescribed $\ell \times n$ matrix $A$ is constructed so that each of its rows belongs to a linear code $\mathcal{C} \subseteq \mathbb{R}^n$. Consequently, for any input row vector $\mathbf{u} \in \mathbb{R}^\ell$, the ideal computed output $\mathbf{c} = \boldsymbol{u} A$ also resides in $\mathcal{C}$. The received vector $\boldsymbol{y}$ is modeled as
\begin{align}\label{eq1.1}
\boldsymbol{y}=\boldsymbol{c}+\boldsymbol{\varepsilon}+\boldsymbol{e},
\end{align}
where $\boldsymbol{\varepsilon} \in \mathbb{R}^n$ represents tolerable noise with bounded magnitude $
||\boldsymbol{\varepsilon}||_\infty \le \delta$, and $\boldsymbol{e} \in \mathbb{R}^n$ captures outlying errors-sparse, large-magnitude deviations that must be located or detected. 
The goal is to design the code $\mathcal{C}$ and an associated decoder that can locate entries of $\boldsymbol{e}$ whose absolute values exceed a prescribed threshold $\Delta \gg \delta$, provided that the number of such outliers is limited. 
For further developments in Analog ECCs, including code constructions for single and multiple error patterns, as well as the determination of correction thresholds, please refer to \cite{AECC2024,AECC20242,2025Analog,zhu2026new}.

A central tool in analyzing the correction capability of Analog ECCs is the height profile \cite{AECC2020}. 
For a nonzero vector $\boldsymbol{x}=(x_0,x_1,\ldots,x_{n-1})^T \in \mathbb{R}^n$, let its coordinates be ordered by descending absolute value: $|x_{\pi(0)}| \ge |x_{\pi(1)}| \ge \cdots \ge |x_{\pi(n-1)}|$. The $m$-height of $\boldsymbol{x}$ is defined as $\mathsf{h}_m(\mathbf{x}) =\frac{|x_{\pi(0)}|}{|x_{\pi(m)}|}$ (with the convention $\mathsf{h}_m(\mathbf{x}) = \infty$ if $m \ge n$ or $x_{\pi(m)} = 0$). 
The $m$-height of a linear code $\mathcal{C}$ is defined as
\begin{align}\label{eq1.2}
    \mathsf{h}_m(\mathcal{C})=\max_{\boldsymbol{c}\in \mathcal{C}\setminus{\{\boldsymbol{0}\}}}\mathsf{h}_m(\boldsymbol{c}). 
\end{align}
The \cite[Theorem 1]{AECC2020} establishes that for a linear code $\mathcal{C}$, the smallest ratio $\Delta/\delta$ for which a decoder can correct $\tau$ errors and detect $\sigma$ additional errors is precisely
\begin{align*}
    \Gamma_{2\tau+\sigma}(\mathcal{C}) := 2\cdot \bigl( \mathsf{h}_{2\tau+\sigma}(\mathcal{C}) + 1 \bigr).
\end{align*}
Thus, the height profile of a linear code directly quantifies the trade-off between redundancy, error-handling capability, and the threshold ratio $\Delta/\delta$.

For the case of single-error detection ($\tau = 0$, $\sigma = 1$), the relevant quantity is $\Gamma_1(\mathcal{C}) = 2\cdot (\mathsf{h}_1(\mathcal{C}) + 1)$. Roth presented a simple construction (an $[n,k]$ linear code $\mathcal{C}$) based on parity-check matrix \cite{AECC2020} achieving $\mathsf{h}_1(\mathcal{C}) \le \lceil k/(n-k) \rceil$. 
He also showed that for $k = n-1$, the bound $\mathsf{h}_1(\mathcal{C}) \ge n-1$ is tight for any $[n,n-1]$ linear code $\mathcal{C}$ \cite[Proposition 5]{AECC2020}. 
The next natural case, where the redundancy is greater than or equal to 2 (i.e., $n-k \geq 2$), remained unresolved and was proposed as an open problem:

\textbf{Problem A (\cite[Problem 1]{AECC2020}): Identify the values of $k$ and $n$ for which every linear $[n, k]$ code $\mathcal{C}$ over $\mathbb{R}$ satisfies:
	$$\mathsf{h}_1(\mathcal{C})\geq \Big\lceil \frac{k}{n-k} \Big\rceil.$$ }

Note that, according to the definition of $\mathsf{h}_1(\cdot)$, \textbf{Problem A} always holds for the parameter range of $k\leq n-k$ (this point is also emphasized in \cite{AECC2020}). Therefore, in the remainder of this paper, we always assume $k> n-k$.
Even the simplest case of $n-k=2$ for \textbf{Problem A} has not yet been solved and was highlighted as Problem 2 in \cite{AECC2020}:

\textbf{Problem B (\cite[Problem 2]{AECC2020}): Must every $(n-2)$-dimensional subspace of $\mathbb{R}^n$, $n$ even, contain a nonzero vector in which the ratio between the largest and second largest absolute values of its entries is at least $(n/2)-1$?}

These questions directly concern single-error detection thresholds for Analog ECCs with redundancy $n-k\geq 2$. Geometrically, it is intimately connected to extremal properties of certain centrally symmetric convex polygons (zonotopes) in the plane. As shown in \cite[Proposition 9]{AECC2020}, any $k$-dimensional code with parity-check matrix $H \in \mathbb{R}^{(n-k) \times n}$ gives rise to a zonotope \cite{Guibas2003Zonotopes} $\mathcal{S}_H = \{ H\boldsymbol{\varepsilon}^\top : ||\boldsymbol{\varepsilon}||_{\infty} \le 1 \}$, and the threshold $\Gamma_1(\mathcal{C})$ equals the maximum, over all directions parallel to the columns of $H$, of the diameter of $\mathcal{S}_H$. 
\textbf{Problem B} thus translates into finding the minimal possible maximal directional diameter among the zonotope $\mathcal{S}_H$---a deceptively simple question that lies at the intersection of coding theory, convex geometry, and optimization.

In this paper, we first resolve \textbf{Problem B} affirmatively. We propose a rigorous proof that for any $(n-2)$-dimensional subspace of $\mathbb{R}^n$ with $n$ even, the height $\mathsf{h}_1(\mathcal{C})$ is indeed bounded below by $(n/2)-1$, and that this bound is tight.
The proof proceeds by interpreting the subspace as the nullspace of a rank-$2$ matrix and reformulating the extremal condition as a separation problem involving zonotopes.
The argument draws on convex optimization and trace estimation for low-rank matrices.
Furthermore, we extend the analytical method for \textbf{Problem B} and prove that the lower bound in \textbf{Problem A} is tight when $n-k$ divides $k$.

The remainder of the paper is organized as follows.
Section~\ref{sec:2} provides the necessary background on Analog ECCs along with relevant mathematical definitions. 
Section~\ref{sec:3} presents a complete proof of \textbf{Problem B}.
Section~\ref{sec:4} presents a proof of \textbf{Problem A} with $n-k|k$ and discusses the related results and their significance.
Section~\ref{sec:5} concludes the paper.

\section{Preliminaries}
\label{sec:2}

\subsection{Notations and Mathematical Definitions}
In this section, we give some notations and  mathematical definitions used in this paper in Table \ref{tab:A}.

The following theorem is a classical result in convex optimization theory and will be used in the argument of Section \ref{sec:3} and Section \ref{sec:4}.
\begin{theorem}[Strict separation of a point and a closed convex set]\cite[Example 2.20]{boyd2004convex}\label{Th.00}
	Let $\Omega \subseteq \mathbb{R}^n$ be a nonempty, closed, convex set, and let $\boldsymbol{y} \in \mathbb{R}^n$ be a point such that $\boldsymbol{y} \notin \Omega$. Then there exists a nonzero vector $\boldsymbol{u} \in \mathbb{R}^n$ and a scalar $c \in \mathbb{R}$ such that
	
	$$
	\langle \boldsymbol{u}, \boldsymbol{y} \rangle > c \quad \text{and} \quad \langle \boldsymbol{u}, \boldsymbol{x} \rangle \leq c \quad \text{for all } \boldsymbol{x} \in \Omega.
	$$
	
	Equivalently, there exists a hyperplane $H = \{\boldsymbol{z} \in \mathbb{R}^n : \langle \boldsymbol{u}, \boldsymbol{z} \rangle = c\}$ that strictly separates $\boldsymbol{y}$ from $\Omega$ (i.e., $\boldsymbol{y}$ lies in the open half-space and $\Omega$ lies in the closed half-space determined by $H$).
\end{theorem}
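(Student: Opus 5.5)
The plan is the standard projection argument. The key object is the Euclidean projection of $\boldsymbol{y}$ onto $\Omega$, and the normal vector $\boldsymbol{u}$ will be the displacement from that projection to $\boldsymbol{y}$.

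\emph{Existence of a nearest point.} Fix any $\boldsymbol{x}_0\in\Omega$, which is possible since $\Omega$ is nonempty, and set $R=\|\boldsymbol{y}-\boldsymbol{x}_0\|_2$. The set $\Omega\cap\{\boldsymbol{x}:\|\boldsymbol{x}-\boldsymbol{y}\|_2\le R\}$ is nonempty, closed and bounded, hence compact. The continuous function $\boldsymbol{x}\mapsto\|\boldsymbol{x}-\boldsymbol{y}\|_2$ therefore attains its minimum on it, at some point $\boldsymbol{p}$. Any point of $\Omega$ outside this ball is farther from $\boldsymbol{y}$ than $\boldsymbol{x}_0$ is, so $\boldsymbol{p}$ minimizes the distance to $\boldsymbol{y}$ over all of $\Omega$. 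Since $\boldsymbol{y}\notin\Omega$, we have $\boldsymbol{p}\neq\boldsymbol{y}$. Define $\boldsymbol{u}=\boldsymbol{y}-\boldsymbol{p}\neq\boldsymbol{0}$ and $c=\langle\boldsymbol{u},\boldsymbol{p}\rangle$.

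\emph{Variational inequality.} This is the main step, and it is where convexity enters. For any $\boldsymbol{x}\in\Omega$ and $t\in(0,1]$, convexity gives $\boldsymbol{p}+t(\boldsymbol{x}-\boldsymbol{p})\in\Omega$. Minimality of $\boldsymbol{p}$ then yields
\[
\|\boldsymbol{u}\|_2^2\le\|\boldsymbol{u}-t(\boldsymbol{x}-\boldsymbol{p})\|_2^2=\|\boldsymbol{u}\|_2^2-2t\langle\boldsymbol{u},\boldsymbol{x}-\boldsymbol{p}\rangle+t^2\|\boldsymbol{x}-\boldsymbol{p}\|_2^2 .
\]
Cancel $\|\boldsymbol{u}\|_2^2$, divide by $t>0$, and let $t\to0^+$. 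This gives $\langle\boldsymbol{u},\boldsymbol{x}-\boldsymbol{p}\rangle\le0$, that is, $\langle\boldsymbol{u},\boldsymbol{x}\rangle\le c$ for all $\boldsymbol{x}\in\Omega$.

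\emph{Strict inequality at $\boldsymbol{y}$.} Finally,
\[
\langle\boldsymbol{u},\boldsymbol{y}\rangle-c=\langle\boldsymbol{u},\boldsymbol{y}-\boldsymbol{p}\rangle=\|\boldsymbol{u}\|_2^2>0 .
\]
The hyperplane $H=\{\boldsymbol{z}:\langle\boldsymbol{u},\boldsymbol{z}\rangle=c\}$ is thus the required separating hyperplane. Uniqueness of $\boldsymbol{p}$ is not needed for this argument.
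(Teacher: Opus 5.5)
Your proof is correct and complete. The compactness step legitimately produces a nearest point $\boldsymbol{p}\in\Omega$. Convexity enters exactly once, through the variational inequality $\langle\boldsymbol{u},\boldsymbol{x}-\boldsymbol{p}\rangle\le 0$. Strictness comes from the margin $\langle\boldsymbol{u},\boldsymbol{y}\rangle-c=\|\boldsymbol{u}\|_2^2>0$.

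The paper gives no proof of its own; it cites Boyd and Vandenberghe, Example 2.20, which takes a different route:
\begin{itemize}
\item Because $\Omega$ is closed, some ball $B(\boldsymbol{y},\epsilon)$ is disjoint from $\Omega$.
\item The general (non-strict) separating hyperplane theorem for two disjoint convex sets gives $\boldsymbol{a}\neq\boldsymbol{0}$ and $b$ with $\langle\boldsymbol{a},\boldsymbol{x}\rangle\le b$ on $\Omega$ and $\langle\boldsymbol{a},\boldsymbol{z}\rangle\ge b$ on the ball.
\item Minimizing over the ball gives $\langle\boldsymbol{a},\boldsymbol{y}\rangle\ge b+\epsilon\|\boldsymbol{a}\|_2$, and shifting the offset to $b+\epsilon\|\boldsymbol{a}\|_2/2$ makes the separation strict.
\end{itemize}

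That approach is modular: strict point separation becomes a short corollary of the two-set theorem, and closedness is used only to fit a ball between $\boldsymbol{y}$ and $\Omega$. It does depend on the two-set theorem, however. In the same book, that theorem is proved (in the case of an attained positive distance) by essentially your nearest-point argument applied to a pair of sets.

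Your direct version skips the ball and the offset shift, and gives you an explicit normal $\boldsymbol{u}=\boldsymbol{y}-\boldsymbol{p}$ together with an explicit margin. Either version delivers what the paper actually uses later, namely $\langle\boldsymbol{u}_i,c\boldsymbol{m}_i\rangle>\max_{\boldsymbol{z}\in Z_i}\langle\boldsymbol{u}_i,\boldsymbol{z}\rangle$ for the compact zonotopes $Z_i$.
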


\begin{table*}[htpb]
	\centering
	\caption{Main Notations and Mathematical Definitions Used in This Paper.}
	\renewcommand{\arraystretch}{1.1}
	\begin{tabular}{c|c}
			\hline
			\hline
			Notation& Description\\
			\hline
			$n$&The length of the linear code $\mathcal{C}$.\\
			\hline
			$k$&The dimension of the linear code $\mathcal{C}$.\\
			\hline
			$d$& The minimum distance of a linear code.\\
			\hline
			$[a,b]$& For non‑negative integers $a$ and $b$ with $a<b$, $[a,b]:=\{a,a+1,\ldots,b\}$\\
			\hline
			$2^S$& For a set \(S\), \(2^S\) denotes the power set of \(S\), i.e., the set of all subsets of \(S\), including the empty set and \(S\) itself.\\
			\hline
			$\mathbb{R}$&The field of real numbers.\\
			\hline
			$\mathbb{C}$& The field of complex numbers.\\
			\hline
			$\lvert\cdot\rvert$&If \( a \in \mathbb{R} \), then \( |a| \) denotes its absolute value; if \( a \in \mathbb{C} \), then \( |a| \) denotes its modulus.\\
			\hline
			$\lVert\cdot\rVert_{\infty}$&
			\makecell{The infinity norm of a vector or a matrix. Specifically, for $\boldsymbol{x}=(x_i)_{i\in[0,n-1]}\in\mathbb{R}^n$, $||\boldsymbol{x}||_{\infty}=\max_{i\in[0,n-1]}|x_i|$; 
				\\
				for $A=(a_{i,j})_{i\in[0,m-1],j\in[0,n-1]}\in\mathbb{R}^{m\times n}$, $||A||_{\infty}=\max_{i\in[0,m-1]}(\sum_{j=0}^{n-1}|a_{i,j}|)$.}
			\\
			\hline
			$\mathcal{Q}(n,\delta)$&$\{\boldsymbol{\varepsilon}=(\varepsilon_i)_{i\in[0,n-1]}\in\mathbb{R}^n||\varepsilon_i|\leq \delta, \ \forall i\in[0,n-1]\}.$\\
			\hline
			$\mathcal{B}(n,m)$&The set of vectors in $\mathbb{R}^{n}$ whose Hamming weight does not exceed $m$.\\
			\hline
			$\rank(A)$&The rank of $A\in\mathbb{R}^{m\times n}$.\\
			\hline
			$A^T$&The transpose of matrix $A$.\\
			\hline
			$\operatorname{Tr}(A)$  & The trace of matrix $A=(a_{i,j})_{i\in[0,n-1],j\in[0,n-1]}\in\mathbb{R}^{n\times n}$, i.e., $\operatorname{Tr}(A)=\sum_{i=0}^{n-1}a_{i,i}$.\\
			\hline
			$\boldsymbol{x}+S$ & For any vector \( \boldsymbol{x}\in\mathbb{R}^n \) and set \( S\subset\mathbb{R}^n \), \( \boldsymbol{x} + S \) is defined as the set \( \{ \boldsymbol{x}+\boldsymbol{y} \mid \boldsymbol{y} \in S \} \).\\
			\hline
			$\langle \boldsymbol{x}, \boldsymbol{y} \rangle$& The inner product of $\boldsymbol{x}$ and $\boldsymbol{y}$, where $\boldsymbol{x},\boldsymbol{y}\in\mathbb{R}^n$.\\
			\hline
			Zonotope& A zonotope is a convex polytope that can described as the Minkowski sum of a finite set of line segments in $\mathbb{R}^n$.\\
			\hline
			\hline
			
	\end{tabular}\label{tab:A}
\end{table*}

\subsection{Analog Error-Correcting Codes}
In this section, we recall the necessary definitions and results of Analog ECCs \cite{AECC2019,AECC2020}.

An $[n,k]$ linear code $\mathcal{C}$ over the real field $\mathbb{R}$ is a $k$-dimensional subspace of $\mathbb{R}^n$.
In the framework of Analog ECCs, a decoder for $\mathcal{C}$ is a mapping $\mathcal{D}: \mathbb{R}^n \to 2^{\{0,1,\ldots,n-1\}} \cup \{\text{"e"}\}$ that either returns a subset of indices (the estimated locations of outlying errors) or a special symbol “e” indicating error detection. 

Consider the model of Eq. \eqref{eq1.1}:
\begin{align*}
\boldsymbol{y}=\boldsymbol{c}+\boldsymbol{\varepsilon}+\boldsymbol{e},
\end{align*}
where $\boldsymbol{\varepsilon}=(\varepsilon_0,\varepsilon_1,\ldots,\varepsilon_{n-1})^T \in \mathbb{R}^n$ with $||\boldsymbol{\varepsilon}||_{\infty}\leq \delta$ represents unavoidable and tolerable noise, and $\boldsymbol{e}=(e_0,e_1,\ldots,e_{n-1})^T \in \mathbb{R}^n$ represents outlying errors that need to be located or detected. 
Given nonnegative integers $\tau$ and $\sigma$, we say that $\mathcal{D}$ corrects $\tau$ errors and detects $\sigma$ additional errors with respect to the threshold pair $(\delta, \Delta)$ if, for every $\boldsymbol{y}$ as above with $\boldsymbol{e}$ having at most $\tau+\sigma$ nonzero entries, the following hold:

(D1) If $\mathbf{e}$ has at most $\tau$ nonzero entries, then $\mathcal{D}(\mathbf{y}) \neq \text{"e"}$.

(D2) If $\mathcal{D}(\mathbf{y}) \neq \text{"e"}$, then $\mathrm{Supp}_\Delta(\boldsymbol{e}) \subseteq \mathcal{D}(\boldsymbol{y}) \subseteq \mathrm{Supp}_0(\boldsymbol{e})$,
where $\mathrm{Supp}_\Delta(\boldsymbol{e}) = \{ j\in\{0,1,\ldots,n-1\}\mid |e_j| > \Delta \}$ and $\mathrm{Supp}_0(\boldsymbol{e})$ is the ordinary support of $\boldsymbol{e}$. 
In other words, the decoder must locate all outlying errors exceeding $\Delta$ in magnitude, must not falsely flag positions without errors, and, when the error count exceeds $\tau$ but remains within $\tau+\sigma$, it can detect that errors have occurred but may fail to locate them.

The following theorem establishes the precise relationship between the height profile (see Eq. \eqref{eq1.2}) and the achievable threshold ratio $\Delta/\delta$ for linear codes.

\begin{theorem}\label{TH.01}
\cite[Theorem 1]{AECC2020}
There exists a $(\tau,\sigma)$ decoder for ($\mathcal{C}, \Delta:\delta)$ if and only if
\begin{align*}
    \Delta\geq (2\mathsf{h}_{2\tau+\sigma}(\mathcal{C})+2)\cdot \delta.
\end{align*}
\end{theorem}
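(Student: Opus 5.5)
The plan is to prove the two directions separately, writing $m=2\tau+\sigma$ and $h=\mathsf{h}_m(\mathcal{C})$ throughout. Necessity will come from a confusability construction built on a codeword that attains the height. Sufficiency will come from an explicit decoder whose correctness reduces to the definition of $\mathsf{h}_m$ applied to the difference of two candidate codewords.

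\emph{Necessity.} Assume $\Delta<(2h+2)\delta$ and take a nonzero $\boldsymbol{c}\in\mathcal{C}$ with $\mathsf{h}_m(\boldsymbol{c})=h$.
\begin{itemize}
\item If $h<\infty$, scale $\boldsymbol{c}$ so that $|c_{\pi(m)}|=2\delta$. Then $|c_{\pi(0)}|=2h\delta$ and $|c_{\pi(j)}|\le 2\delta$ for all $j\ge m$.
\item Put $A=\{\pi(1),\ldots,\pi(\tau)\}$ and $B=\{\pi(0)\}\cup\{\pi(\tau+1),\ldots,\pi(2\tau+\sigma-1)\}$. These are disjoint, with $|A|=\tau$ and $|B|=\tau+\sigma$.
\item Build one received word $\boldsymbol{y}$ with two explanations: $\boldsymbol{y}=\boldsymbol{c}+\boldsymbol{\varepsilon}+\boldsymbol{e}$ with $\mathrm{Supp}_0(\boldsymbol{e})\subseteq A$, and $\boldsymbol{y}=\boldsymbol{0}+\boldsymbol{\varepsilon}'+\boldsymbol{e}'$ with $\mathrm{Supp}_0(\boldsymbol{e}')\subseteq B$.
\item On coordinates $\pi(j)$ with $j\ge m$, take $\boldsymbol{\varepsilon}=-\boldsymbol{c}/2$ and $\boldsymbol{\varepsilon}'=\boldsymbol{c}/2$, which are admissible since $|c_{\pi(j)}|\le 2\delta$.
\item On $A$ and on $B\setminus\{\pi(0)\}$, let the error of the respective scenario absorb the discrepancy.
\item At $\pi(0)$, take $\varepsilon_{\pi(0)}=\delta\,\mathrm{sgn}(c_{\pi(0)})$ and $\varepsilon'_{\pi(0)}=-\delta\,\mathrm{sgn}(c_{\pi(0)})$. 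This gives $|e'_{\pi(0)}|=(2h+2)\delta>\Delta$.
\end{itemize}
By (D1) in the first scenario, $\mathcal{D}(\boldsymbol{y})\neq$"e" and $\mathcal{D}(\boldsymbol{y})\subseteq A$. By (D2) in the second scenario, $\pi(0)\in\mathcal{D}(\boldsymbol{y})$, which contradicts $\pi(0)\notin A$. If $h=\infty$, the same construction works for every $\Delta$ by scaling $\boldsymbol{c}$ arbitrarily large while its entries outside the top $m$ stay zero.

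\emph{Sufficiency.} Given $\Delta\ge(2h+2)\delta$, the decoder searches for some $\boldsymbol{c}'\in\mathcal{C}$ and a set $S$ with $|S|\le\tau$ such that $\|(\boldsymbol{y}-\boldsymbol{c}')_{S^c}\|_\infty\le\delta$. If no such pair exists, it outputs "e". Otherwise it outputs the positions of $S$ whose residual $|y_j-c'_j|$ exceeds a suitable threshold.
\begin{itemize}
\item (D1) holds because the true pair $(\boldsymbol{c},\mathrm{Supp}_0(\boldsymbol{e}))$ is feasible whenever $|\mathrm{Supp}_0(\boldsymbol{e})|\le\tau$.
\item For (D2), compare $\boldsymbol{c}$ with any feasible $\boldsymbol{c}'$. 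The vector $\boldsymbol{c}-\boldsymbol{c}'$ has all entries of magnitude at most $2\delta$ outside $S\cup\mathrm{Supp}_0(\boldsymbol{e})$, a set of size at most $2\tau+\sigma=m$. Hence its $(m+1)$-st largest entry is at most $2\delta$, and by the definition of $\mathsf{h}_m$ every entry satisfies $|(\boldsymbol{c}-\boldsymbol{c}')_j|\le 2h\delta$.
\item Any $j$ with $|e_j|>\Delta$ must therefore lie in $S$: otherwise $|y_j-c'_j|\le\delta$, which forces $|e_j|\le(2h+2)\delta\le\Delta$.
\end{itemize}

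The main obstacle is the output threshold. The crude bounds give $|y_j-c'_j|\le(2h+1)\delta$ for error-free $j$, but only $|y_j-c'_j|>\delta$ for outliers. So a fixed threshold on this single residual does not separate the two cases, and the cross-term $(\boldsymbol{c}-\boldsymbol{c}')_j$ is what blocks it. To remove it, I would first try choosing $\boldsymbol{c}'$ to minimize $\|(\boldsymbol{y}-\boldsymbol{c}')_{S^c}\|_\infty$, and deciding membership of $j$ by whether some feasible explanation with $j\notin S$ exists. With that rule, an outlier $j$ is never omitted by the displayed bound. A false positive would require every feasible explanation to include $j$, which the same height argument applied to the true pair should rule out. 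If this refinement proves delicate, the direction can instead be closed by citing the decoder of \cite[Theorem 1]{AECC2020} directly.
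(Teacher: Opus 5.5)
Your necessity direction is correct. The word $\boldsymbol y$ has one explanation with codeword $\boldsymbol c$ and error supported in $A$ ($|A|=\tau$), and another with codeword $\boldsymbol 0$ and error supported in $B$ ($|B|=\tau+\sigma$). All noise entries are within $\delta$, and (D1) plus the two instances of (D2) force $\pi(0)\in\mathcal D(\boldsymbol y)\subseteq A$. The paper gives no argument for this statement beyond the citation to Roth, so there is nothing internal to compare with.

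The genuine gap is in sufficiency, at the point you flagged. Take your refined rule $\mathcal D(\boldsymbol y)=\bigcap S$ over all feasible pairs $(\boldsymbol c',S)$. (D1) holds, and your displayed bound gives $\mathrm{Supp}_\Delta(\boldsymbol e)\subseteq\mathcal D(\boldsymbol y)$ for every admissible decomposition. The containment $\mathcal D(\boldsymbol y)\subseteq\mathrm{Supp}_0(\boldsymbol e)$ also holds whenever $|\mathrm{Supp}_0(\boldsymbol e)|\le\tau$, because $(\boldsymbol c,\mathrm{Supp}_0(\boldsymbol e))$ is then itself feasible. For $\tau<|\mathrm{Supp}_0(\boldsymbol e)|\le\tau+\sigma$, however, the true pair is not feasible, so ``the same height argument applied to the true pair'' has nothing to act on. Comparing two explanations of weight up to $\tau+\sigma$ gives a codeword with up to $2\tau+2\sigma$ large entries, which is out of reach of $\mathsf h_{2\tau+\sigma}$. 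Routing the comparison through a feasible weight-$\le\tau$ explanation only gives $|e_j|\le(4h+2)\delta$, about a factor of two too weak. Falling back on Roth's Theorem~1 is circular, since that is the statement being proved.

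In fact the missing step is false under (D1)--(D2) exactly as reproduced in this paper whenever $\tau,\sigma\ge1$. Take $\tau=\sigma=1$ and $\delta=\tfrac12$. Let $\mathcal C\subseteq\mathbb R^{16}$ be spanned by $\boldsymbol c^{(1)},\boldsymbol c^{(2)}$, whose coordinate pairs $(c^{(1)}_i,c^{(2)}_i)$ are $(-3,3)$ at a position $a$, $(3,-1)$ at $p$, $(-1,3)$ at $q$ and at $r$, and $(1,0)$, $(0,1)$, $(1,1)$ at four further positions each. Every nonzero $\alpha\boldsymbol c^{(1)}+\beta\boldsymbol c^{(2)}$ has at least four entries of magnitude $\max(|\alpha|,|\beta|,|\alpha+\beta|)$. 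A short case check on the signs of $\alpha,\beta$ then gives $\mathsf h_3(\mathcal C)=3$, so the statement predicts a $(1,1)$ decoder for $\Delta=(2\cdot 3+2)\delta=4$.

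Now let $\boldsymbol y$ equal $3.5$ at $a$, $-0.5$ at $p,q,r$, and $0.5$ elsewhere. It has three decompositions, each with noise vector in $\mathcal Q(16,\tfrac12)$:
\begin{itemize}
\item $\boldsymbol y=\boldsymbol 0+\boldsymbol\varepsilon+\boldsymbol e$ with $\boldsymbol e$ supported on $\{a\}$ and $e_a=3.5$;
\item $\boldsymbol y=\boldsymbol c^{(1)}+\boldsymbol\varepsilon'+\boldsymbol e'$ with $e'_a=7$, $e'_p=-3.5$, and $\boldsymbol e'$ zero elsewhere;
\item $\boldsymbol y=\boldsymbol c^{(2)}+\boldsymbol\varepsilon''+\boldsymbol e''$ with $e''_q=e''_r=-3.5$ and $\boldsymbol e''$ zero elsewhere.
\end{itemize}
By (D1) for the first, $\mathcal D(\boldsymbol y)\neq$ ``e''. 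By (D2) for the second, $a\in\mathcal D(\boldsymbol y)$, since $7>4$. By (D2) for the third, $\mathcal D(\boldsymbol y)\subseteq\{q,r\}$. This is a contradiction, so no decoder exists even though $\Delta=(2\mathsf h_3(\mathcal C)+2)\delta$.

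What your argument does establish is the following:
\begin{itemize}
\item the case $\tau=0$, which is the only one this paper uses; there your decoder returns $\emptyset$ on $\mathcal C+\mathcal Q(n,\delta)$ and ``e'' otherwise;
\item the case $\sigma=0$;
\item the general case under a version of (D2) that demands $\mathcal D(\boldsymbol y)\subseteq\mathrm{Supp}_0(\boldsymbol e)$ only when $|\mathrm{Supp}_0(\boldsymbol e)|\le\tau$.
\end{itemize}
Going beyond these cases requires changing the definition, not the decoder.
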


According to Theorem \ref{TH.01},
the minimal achievable ratio $\Delta/\delta$ for a given error-handling capability $(\tau, \sigma)$ for code $\mathcal{C}$ is exactly
$\Gamma_{2\tau+\sigma}(\mathcal{C}) := 2\cdot \bigl( \mathsf{h}_{2\tau+\sigma}(\mathcal{C}) + 1 \bigr).$
In this paper, we are concerned with the case of single-error detection, corresponding to $\tau = 0$ and $\sigma = 1$, and the relevant figure of merit is
$\Gamma_{1}(\mathcal{C})= 2\cdot \bigl( \mathsf{h}_{1}(\mathcal{C}) + 1 \bigr).$
Although a qualitative description of the value of $\Gamma_1(\mathcal{C})$ for a given code $\mathcal{C}$ was provided from a geometric perspective in \cite[Proposition 9]{AECC2020} (as stated in Theorem \ref{prop:9} in Section \ref{sec:3}), its lower bound remains unknown.

According to the definition, the quantity $\mathsf{h}_1(\mathcal{C})$ measures the maximum possible ratio between the largest and second largest absolute values among the coordinates of any nonzero codeword in code $\mathcal{C}$.
For an arbitrary $[n,k]$ linear code $\mathcal{C}$ with $k>n-k$, the lower bound problem for $\mathsf{h}_1(\mathcal{C})$ is as follows:

\textbf{Problem A (\cite[Problem 1]{AECC2020}): Identify the values of $k$ and $n$ for which every linear $[n, k]$ code $\mathcal{C}$ over $\mathbb{R}$ satisfies:
	$$\mathsf{h}_1(\mathcal{C})\geq \Big\lceil \frac{k}{n-k} \Big\rceil.$$ }

Even for the simplest parameter regime with redundancy $2$, i.e., $k=n-2$, the lower bound on $\mathsf{h}_1(\mathcal{C})$ remains an open problem, whose mathematical essence can be stated as follows:

\textbf{Problem B (\cite[Problem 2]{AECC2020}): Must every $(n-2)$-dimensional subspace of $\mathbb{R}^n$, $n$ even, contain a nonzero vector in which the ratio between the largest and second largest absolute values of its entries is at least $(n/2)-1$?}

In Section \ref{sec:3}, using convex optimization theory, we first provide a detailed proof of \textbf{Problem B} to help readers understand our analytical method. Then, in Section \ref{sec:4}, we extend the relevant conclusions in Section \ref{sec:3} and present a proof of the lower bound for \textbf{Problem A} in the case where $n-k$ divides $k$.

\section{The Proof of \textbf{Problem B}}
\label{sec:3}

\subsection{Geometric Insights}
In this section, we first discuss \textbf{Problem B} from a geometric perspective to help the reader understand the main ideas, thereby avoiding an abrupt presentation of the proof.

In fact, the following result has already been given from a geometric perspective in \cite{AECC2020} in terms of the analysis of the value of \( \Gamma_{1}(\mathcal{C}) \) for any $[n,k]$ linear code $\mathcal{C}$.

\begin{theorem}\cite[Proposition 9]{AECC2020}
	\label{prop:9}
	Given a linear \([n, k, d \geq 2]\) code $\mathcal{C}$ over \(\mathbb{R}\), let \(H = (\boldsymbol{h}_j)_{j \in [0,n-1]}\) be any \((n-k) \times n\) parity-check matrix of $\mathcal{C}$ and
	$$\mathcal{S}_H:=\{H\cdot \boldsymbol{\varepsilon}|\boldsymbol{\varepsilon}\in\mathcal{Q}(n,1)\} .$$ 
	Then \(\Gamma_1(\mathcal{C})\) equals the smallest \(\Delta \in \mathbb{R}^+\) such that for every \(j \in [0,n-1]\) and every \(e \in \mathbb{R}\) such that \(|e| > \Delta\), the translations
	
	\[e \cdot \boldsymbol{h}_j + \mathcal{S}_H \quad \text{and} \quad \mathcal{S}_H\]
	are disjoint; equivalently,
	\begin{align}\label{eq_AA}
		e \cdot \boldsymbol{h}_j \notin 2\mathcal{S}_H=\{H\cdot \boldsymbol{\varepsilon}|\boldsymbol{\varepsilon}\in\mathcal{Q}(n,2)\}.
	\end{align}
\end{theorem}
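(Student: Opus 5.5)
The plan is to translate membership $e\cdot\boldsymbol{h}_j\in 2\mathcal{S}_H$ into a statement about codewords, and then read off the extremal $|e|$ in terms of $\mathsf{h}_1(\mathcal{C})$. Write $\boldsymbol{u}_j$ for the $j$-th unit vector, so that $H\boldsymbol{u}_j=\boldsymbol{h}_j$.

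\textbf{Step 1 (equivalence with codewords).} We have $e\cdot\boldsymbol{h}_j\in 2\mathcal{S}_H$ iff there is $\boldsymbol{\varepsilon}\in\mathcal{Q}(n,2)$ with $H(e\boldsymbol{u}_j-\boldsymbol{\varepsilon})=\boldsymbol{0}$. Since $\mathcal{C}$ is the null space of $H$, this holds iff there is $\boldsymbol{c}=e\boldsymbol{u}_j-\boldsymbol{\varepsilon}\in\mathcal{C}$. Equivalently, there is $\boldsymbol{c}\in\mathcal{C}$ with $|c_i|\le 2$ for all $i\neq j$ and $|c_j-e|\le 2$.

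The disjointness formulation is equivalent to this. Indeed, $(e\boldsymbol{h}_j+\mathcal{S}_H)\cap\mathcal{S}_H\neq\emptyset$ iff $e\boldsymbol{h}_j\in\mathcal{S}_H-\mathcal{S}_H=2\mathcal{S}_H$. This uses that $\mathcal{S}_H$ is centrally symmetric and convex, and that $\mathcal{Q}(n,1)-\mathcal{Q}(n,1)=\mathcal{Q}(n,2)$.

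\textbf{Step 2 (the set of admissible $e$ is a symmetric interval).} Let $E_j=\{e\in\mathbb{R}: e\boldsymbol{h}_j\in 2\mathcal{S}_H\}$. This set is convex, closed and symmetric about $0$, because $2\mathcal{S}_H$ is a compact, convex, centrally symmetric set. Moreover $\boldsymbol{h}_j\neq\boldsymbol{0}$: otherwise $\boldsymbol{u}_j$ would be a weight-one codeword, contradicting $d\ge2$. Hence $E_j$ is bounded and equals $[-M_j,M_j]$ for some finite $M_j$. By Step 1,
\begin{align*}
M_j=2+\max\{|c_j| : \boldsymbol{c}\in\mathcal{C},\ |c_i|\le2\ \forall i\neq j\}.
\end{align*}
The maximum exists by compactness, since the feasible set is bounded: a codeword vanishing off $j$ is zero because $d\ge2$. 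By homogeneity, $M_j=2+2\rho_j$, where $\rho_j=\max_{\boldsymbol{c}\in\mathcal{C}\setminus\{\boldsymbol{0}\}} |c_j|/\max_{i\neq j}|c_i|$. The denominator is nonzero again because $d\ge2$.

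\textbf{Step 3 (identify $\max_j\rho_j$ with $\mathsf{h}_1(\mathcal{C})$).} Take a nonzero $\boldsymbol{c}$ and let $j=\pi(0)$ be a position of largest absolute value. Then $|c_j|/\max_{i\ne j}|c_i|=\mathsf{h}_1(\boldsymbol{c})$, so $\max_j\rho_j\ge\mathsf{h}_1(\mathcal{C})$. Conversely, for any $j$ and $\boldsymbol{c}$, either $c_j$ is a largest entry, in which case the ratio equals $\mathsf{h}_1(\boldsymbol{c})$, or it is not, in which case the ratio is at most $1\le \mathsf{h}_1(\boldsymbol{c})$. Hence $\max_j\rho_j=\mathsf{h}_1(\mathcal{C})$.

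\textbf{Conclusion.} The smallest $\Delta$ such that $|e|>\Delta$ forces $e\boldsymbol{h}_j\notin2\mathcal{S}_H$ for every $j$ is $\max_j M_j=2(\mathsf{h}_1(\mathcal{C})+1)=\Gamma_1(\mathcal{C})$, as claimed.

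The main care point is Step 2: the closedness, boundedness and attainment arguments, which rely on $d\ge2$. The rest is bookkeeping.
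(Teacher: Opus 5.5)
Your proof is correct. The paper gives no proof of its own here: the statement is quoted from \cite{AECC2020} and used only heuristically. The paper rewrites \eqref{eq_AA} as $\frac{e-2}{2}\boldsymbol{h}_j\notin Z_j$ in \eqref{eq_A}. In Part~2 of Theorem~\ref{th.main}, it turns a point $c\boldsymbol{m}_i\in Z_i$ into a codeword with one coordinate equal to $c$ and all others bounded by $1$. Your Steps~1--2 are the rigorous version of exactly that translation: with $F_j=\{\boldsymbol{c}\in\mathcal{C}: |c_i|\le 2\ \forall i\neq j\}$ one has $\{c_j:\boldsymbol{c}\in F_j\}=\{t\in\mathbb{R}: t\boldsymbol{h}_j\in 2Z_j\}$. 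They also supply what the paper glosses over: $\boldsymbol{h}_j\neq\boldsymbol{0}$ and boundedness of $F_j$ from $d\ge 2$, and closedness of the admissible set of $e$, so that the ``smallest $\Delta$'' is actually attained.

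Where your route differs from the one suggested by the operational meaning of $\Gamma_1$ is that you never touch decoders. You match the geometric threshold directly against the closed form $2(\mathsf{h}_1(\mathcal{C})+1)$, which is how the paper defines $\Gamma_1(\mathcal{C})$. An operational proof would instead argue from (D1)--(D2) that a decoder can distinguish ``no error'' from ``a single outlier of magnitude $>\Delta$ at position $j$'' exactly when $e\boldsymbol{h}_j+\mathcal{S}_H$ and $\mathcal{S}_H$ are disjoint, and then invoke Theorem~\ref{TH.01} for the numerical value. Yours is self-contained and elementary; the operational one explains why this particular geometric condition is the right one. If $\Gamma_1$ is read as the operational threshold, your computation together with Theorem~\ref{TH.01} covers that reading too.

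Two cosmetic remarks. First, $\mathcal{S}_H-\mathcal{S}_H=2\mathcal{S}_H$ needs only linearity of $H$ and $\mathcal{Q}(n,1)-\mathcal{Q}(n,1)=\mathcal{Q}(n,2)$, not convexity or central symmetry. Second, the formula for $M_j$ is most transparent from $E_j=\bigcup_{\boldsymbol{c}\in F_j}[c_j-2,c_j+2]$ together with $F_j=-F_j$.
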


In Theorem \ref{prop:9}, we can see that the condition $e \cdot \boldsymbol{h}_j \notin 2\mathcal{S}_H$ is actually equivalent to:
\begin{align}\label{eq_A}
	\frac{e-2}{2}\cdot \boldsymbol{h}_j\notin Z_j,
\end{align}
where for any $j\in[0,n-1]$,
\begin{align}\label{eq_B}
	Z_j:=\left\{ \sum_{j\neq i\in[0,n-1]} \alpha_j \boldsymbol{h}_j \;\mid\; -1\leq \alpha_j\leq 1,\ \forall j\neq i\in[0,n-1]\right\}.
\end{align}
We can see that \( Z_j \) is actually a zonotope generated by $\{\boldsymbol{h}_i\}_{i\neq j\in[0,n-1]}$, which is a convex and closed set in \( \mathbb{R}^{n-k} \). Eq. \eqref{eq_A} implies that the point \( \frac{e-2}{2}\cdot \boldsymbol{h}_j \) and \( Z_j \) are strictly separated.
Then, according to a classical result in convex optimization theory (Theorem \ref{Th.00}), they can be strictly separated by a hyperplane.
Fig. \ref{fig:2} illustrates the geometric intuition described above (taking $n-k=2$ as an example), in which the hyperplane is the red line.
Note that \(\mathsf{h}_1(\mathcal{C}) = \frac{\Gamma_{1}(\mathcal{C})-2}{2}\), whose form is consistent with the expression \(\frac{\Delta-2}{2}\) in Fig. \ref{fig:2}, implying that the latter illustrates the geometric meaning of \(\mathsf{h}_1(\mathcal{C})\).

\begin{figure*}[htpb]
	\centering
	\includegraphics[width=0.76\linewidth]{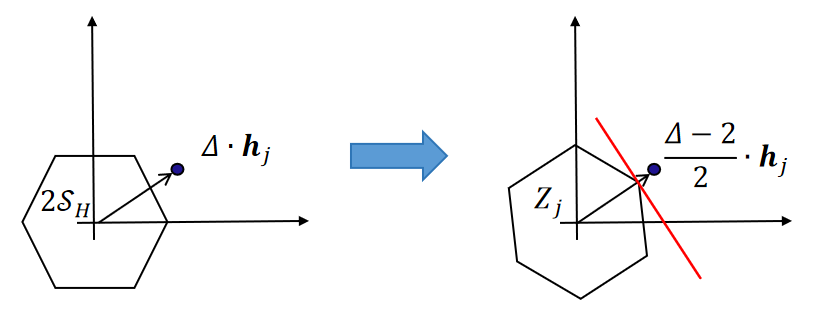}
	\caption{The geometric intuition from Eq. \eqref{eq_AA} to Eq. \eqref{eq_B}.}
	\label{fig:2}
\end{figure*}

Although Theorem \ref{prop:9} provides, from the perspective of coding theory combined with geometry, a geometric characterization of the value of \( \Gamma_{1}(\mathcal{C}) \) for any given linear code \( \mathcal{C} \), it does not provide the lower bound sought in \textbf{Problem A}. 

From an algebraic perspective, an $[n,k]$ linear code \(\mathcal{C}\) over \(\mathbb{R}\) is essentially a \(k\)-dimensional subspace of $\mathbb{R}^n$ and uniquely corresponds to its null space, which is equivalent to the linear space generated by the \(n-k\) rows of its parity-check matrix. 
Inspired by the geometric meaning of $\mathsf{h}_1(\mathcal{C})$ illustrated in Fig. \ref{fig:2} (the strict separation of a point and a zonotope), we skillfully construct, from a mathematical point of view, a family of hyperplanes and an associated auxiliary matrix \(A\) (see Eq. \eqref{eq_th9}) to complete the proof of \textbf{Problem B}.

\subsection{Proof of \textbf{Problem B}}

Before proceeding to the proof of the problem, we first present an auxiliary lemma.

\begin{lemma}\label{lem3.1}
For a real matrix \( A\in\mathbb{R}^{n\times n} \), if \( ||A||_{\infty} \leq 1 \) and \( \rank(A) \leq 2 \), then \( \operatorname{Tr}(A) \leq 2 \).
\end{lemma}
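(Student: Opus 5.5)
The plan is to argue spectrally: express $\operatorname{Tr}(A)$ as the sum of the eigenvalues of $A$, show that at most two of them are nonzero, and bound each one in modulus by $\lVert A\rVert_\infty\le 1$.

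\textbf{Step 1: at most two nonzero eigenvalues.} Over $\mathbb{C}$, the trace equals the sum of the eigenvalues counted with algebraic multiplicity, so $\operatorname{Tr}(A)=\sum_{i=0}^{n-1}\lambda_i$. Write the characteristic polynomial as $\det(\lambda I-A)=\sum_{j=0}^{n}(-1)^j E_j(A)\lambda^{n-j}$. Here $E_j(A)$ is the sum of all $j\times j$ principal minors of $A$. Since $\rank(A)\le 2$, every $j\times j$ minor with $j\ge 3$ vanishes, so $E_j(A)=0$ for $j\ge 3$. Hence $\lambda^{n-2}$ divides the characteristic polynomial, and at most two eigenvalues $\lambda_1,\lambda_2$ (with multiplicity) can be nonzero. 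This gives $\operatorname{Tr}(A)=\lambda_1+\lambda_2$. The pair is either two real numbers or a complex-conjugate pair, consistent with the trace being real.

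\textbf{Step 2: each eigenvalue has modulus at most one.} Let $Av=\lambda v$ with $v\in\mathbb{C}^n\setminus\{0\}$, and choose an index $i$ with $|v_i|=\max_j|v_j|>0$. Then
$$|\lambda||v_i|=\Big|\sum_j a_{i,j}v_j\Big|\le \Big(\sum_j|a_{i,j}|\Big)|v_i|\le \lVert A\rVert_\infty|v_i|\le|v_i|.$$
This is the standard fact that the spectral radius is bounded by any induced norm. It yields $|\lambda_1|,|\lambda_2|\le 1$.

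\textbf{Step 3: conclude.} Since $\operatorname{Tr}(A)$ is real,
$$\operatorname{Tr}(A)=\mathrm{Re}(\lambda_1+\lambda_2)\le|\lambda_1|+|\lambda_2|\le 2.$$

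The only mildly delicate point is Step 1. Rank at most $2$ bounds the geometric multiplicity of the eigenvalue $0$ from below, but the claim is about algebraic multiplicity, and a non-diagonalizable $A$ could in principle differ. That is why I would argue through the vanishing of the coefficients $E_j(A)$, $j\ge 3$, of the characteristic polynomial rather than through eigenvectors. An alternative for this step is to factor $A=UV$ with $U\in\mathbb{R}^{n\times 2}$ and $V\in\mathbb{R}^{2\times n}$. Then the nonzero eigenvalues of $A$ coincide, with algebraic multiplicity, with those of the $2\times 2$ matrix $VU$, and $\operatorname{Tr}(A)=\operatorname{Tr}(VU)$.
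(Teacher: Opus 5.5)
Your proposal is correct and follows essentially the same spectral argument as the paper: at most two nonzero eigenvalues, each of modulus at most $\|A\|_\infty\le 1$ via the max-row-sum bound on an eigenvector, with the trace equal to their sum. Your characteristic-polynomial justification of Step~1, your use of eigenvectors in $\mathbb{C}^n$ (the paper loosely writes $\mathbb{R}^n$), and bounding $\mathrm{Re}(\lambda_1+\lambda_2)$ instead of splitting into real and complex cases are just tidier packaging. As a side remark, Step~1 is less delicate than you suggest: algebraic multiplicity always dominates geometric multiplicity, so $\dim\ker A\ge n-2$ already forces the eigenvalue $0$ to have algebraic multiplicity at least $n-2$.
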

\begin{proof}
	Since $\rank⁡(A)\leq 2$, the matrix $A$ has at most two nonzero eigenvalues (counting algebraic multiplicities).
	Let the nonzero eigenvalues be denoted by $\lambda_1$ and $\lambda_2$ (if there is only one nonzero eigenvalue, we set the other to zero; if none, then $\operatorname{Tr}(A)=0$ and the inequality holds trivially).
	Then we have 
	\begin{align*}
		\operatorname{Tr}(A)=\lambda_1+\lambda_2.
	\end{align*}
	
	For $i=1,2$, let $\boldsymbol{v}_i\neq \boldsymbol{0}\in\mathbb{R}^n$ be the eigenvector of $\lambda_i$, so that $A \boldsymbol{v}_i = \lambda_i \boldsymbol{v}_i$.
	By the compatibility of the matrix norm with the vector norm, we have
	\begin{align*}
		|\lambda_i| \,\|\boldsymbol{v}_i\|_{\infty}=\|A \boldsymbol{v}_i\|_{\infty} \le \|A\|_{\infty}\,\|\boldsymbol{v}_i\|_{\infty} \le \|\boldsymbol{v}_i\|_{\infty}, i=1,2.
	\end{align*}
	Since $\|\boldsymbol{v}_i\|_{\infty}>0$, we have $|\lambda_i|\leq 1$, for $i=1,2$.
	
	We now consider the possible nature of $\lambda_1$ and $\lambda_2$.
	
	\textbf{Case 1:} If $\lambda_1$ and $\lambda_2$ are both real, then
	\begin{align*}
		\operatorname{Tr}(A)=\lambda_1+\lambda_2\leq |\lambda_1|+|\lambda_2|\leq 2.
	\end{align*} 
	
	\textbf{Case 2:} If $\lambda_1$ and $\lambda_2$ form a complex conjugate pair (since $A$ is real, nonreal eigenvalues appear in conjugate pairs).
	Let $\lambda_1=a+bi$ and $\lambda_2=a-bi$, where $a,b\in\mathbb{R}$, $b\neq 0$, and  $i$ is the imaginary unit.
	Then $\operatorname{Tr}(A)=2a$. From $|\lambda_1|=\sqrt{a^2+b^2}\le 1$ we get $a\le|a|\le\sqrt{a^2+b^2}\le 1$, so $\operatorname{Tr}(A)=2a\le 2$.
	
	In summary, the inequality $\operatorname{Tr}(A)\le 2$ holds in all cases. 
	This completes the proof. 	
\end{proof}

Now, we present the proof of the \textbf{Problem B} in the following theorem.
The proof technique combines algebraic reformulation of the subspace as the nullspace of a rank-$2$ matrix, geometric construction of zonotopes from column vectors, and a separation argument via the hyperplane theorem. Finally, we lead to a contradiction unless the desired lower bound holds by Lemma \ref{lem3.1}. 
The tightness of the bound is demonstrated through an explicit subspace construction and a vector achieving the exact ratio.

\begin{theorem}\label{th.main}
	For every even $n\geq 4$, every $(n-2)$-dimensional subspace of $\mathbb{R}^n$ contains a nonzero vector in which the ratio between the largest and second largest absolute values of its entries is at least $(n/2)-1$.
	Moreover, this bound is tight.
\end{theorem}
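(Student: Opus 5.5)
Let $\mathcal{C}$ be an $(n-2)$-dimensional subspace, realized as the null space of a rank-$2$ matrix $H=(\boldsymbol{h}_0,\ldots,\boldsymbol{h}_{n-1})\in\mathbb{R}^{2\times n}$. We argue by contradiction: assume $\mathsf{h}_1(\mathcal{C})<t:=(n/2)-1$.

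\textbf{Step 1 (zero columns).} If some column $\boldsymbol{h}_j=\boldsymbol{0}$, then the unit vector $\boldsymbol{e}_j$ lies in $\mathcal{C}$ and has infinite height, which is a contradiction. So every $\boldsymbol{h}_j\neq\boldsymbol{0}$.

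\textbf{Step 2 (strict separation).} By Theorem~\ref{prop:9}, together with the reformulation in Eqs.~\eqref{eq_A}--\eqref{eq_B} and the identity $\mathsf{h}_1=(\Gamma_1-2)/2$, the assumption says that for each $j$ some point $s_j\boldsymbol{h}_j$ with $s_j<t$ lies outside the zonotope $Z_j$ generated by $\{\boldsymbol{h}_i\}_{i\neq j}$. A more direct route avoids the geometry: if $t\boldsymbol{h}_j\in Z_j$, say $t\boldsymbol{h}_j=\sum_{i\neq j}\alpha_i\boldsymbol{h}_i$ with $|\alpha_i|\le1$, then the vector with entry $t$ at position $j$ and $-\alpha_i$ elsewhere lies in $\mathcal{C}$ and has height at least $t$. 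Hence the assumption gives $t\boldsymbol{h}_j\notin Z_j$ for every $j$.

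Since $Z_j$ is closed and convex, Theorem~\ref{Th.00} yields $\boldsymbol{u}_j\in\mathbb{R}^2$ with
\[
t\langle\boldsymbol{u}_j,\boldsymbol{h}_j\rangle>\max_{z\in Z_j}\langle\boldsymbol{u}_j,z\rangle=\sum_{i\neq j}|\langle\boldsymbol{u}_j,\boldsymbol{h}_i\rangle|.
\]
Scaling $\boldsymbol{u}_j$ so that $\langle\boldsymbol{u}_j,\boldsymbol{h}_j\rangle=1$, this reads $\sum_{i\neq j}|\langle\boldsymbol{u}_j,\boldsymbol{h}_i\rangle|<t$.

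\textbf{Step 3 (auxiliary matrix and trace bound).} Let $U$ be the $n\times 2$ matrix with rows $\boldsymbol{u}_j^T$, and define $A:=UH\in\mathbb{R}^{n\times n}$, so $A_{j,i}=\langle\boldsymbol{u}_j,\boldsymbol{h}_i\rangle$. Then $\rank(A)\le2$, every diagonal entry equals $1$, and each row has off-diagonal absolute sum less than $t$. Hence $\|A\|_\infty<1+t=n/2$ and $\operatorname{Tr}(A)=n$.

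Apply Lemma~\ref{lem3.1} to $A/\|A\|_\infty$. This gives $\operatorname{Tr}(A)\le2\|A\|_\infty<n$, which contradicts $\operatorname{Tr}(A)=n$. We conclude that some $j$ has $t\boldsymbol{h}_j\in Z_j$, and Step 2 then produces a codeword of height at least $t$.

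\textbf{Step 4 (boundary case).} We need care when the maximum in Step 2 is attained exactly. The strict inequality in Theorem~\ref{Th.00} is what makes the final contradiction strict, so the logic closes. One point to check: if some row of $A$ has off-diagonal sum $0$ we still have $\|A\|_\infty=1<n/2$, so the argument is unaffected. We also need $t\ge1$, which holds because $n\ge4$.

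\textbf{Step 5 (tightness).} Follow Roth's construction with $n=2m$. Take $H$ whose columns are $\boldsymbol{e}_1$ repeated $m$ times and $\boldsymbol{e}_2$ repeated $m$ times. Then $\mathcal{C}$ is the set of vectors whose first $m$ coordinates sum to zero and whose last $m$ coordinates sum to zero.

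Take any nonzero codeword and a largest entry $x$, say in the first block. If $x$ were the only nonzero entry in its block, the block sum would be $x\neq0$, so the other $m-1$ entries of that block are not all zero and sum to $-x$. Therefore one of them has absolute value at least $|x|/(m-1)$, giving height at most $m-1=n/2-1$. The vector $(m-1,-1,\ldots,-1,0,\ldots,0)$ shows this value is attained.

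The main obstacle is Step 2: choosing the separating functionals and normalizing them so that the matrix $A$ has unit diagonal, rank at most $2$, and controlled row sums simultaneously. Once that is in place, Lemma~\ref{lem3.1} finishes the argument immediately.
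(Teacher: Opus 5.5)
Your proof is correct and is essentially the paper's argument: contradiction, strict separation of $t\boldsymbol{h}_j$ from $Z_j$, the rank-$2$ matrix $A=UH$ combined with Lemma~\ref{lem3.1}, and the same two-block construction for tightness. The differences are cosmetic: you normalize the diagonal of $A$ to $1$ and use the homogeneous bound $\operatorname{Tr}(A)\le 2\|A\|_\infty<n$, whereas the paper normalizes each row's absolute sum to $1$ and sums the bounds $A_{i,i}>2/n$; your Step~1 on zero columns is harmless but unnecessary, since $\boldsymbol{h}_j=\boldsymbol{0}$ already gives $t\boldsymbol{h}_j\in Z_j$.
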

\begin{proof}
	We prove the theorem by contradiction.
	
	\textbf{Part 1: Algebraic reformulation and the contradiction hypothesis.}
	Let \(V\) be an \((n-2)\)-dimensional subspace of \(\mathbb{R}^n\). By the fundamental theorem of linear algebra, \(V\) can be represented as the kernel of a \(2 \times n\) matrix \(M\) of rank \(2\). Denote the columns of \(M\) by \(\boldsymbol{m}_0, \boldsymbol{m}_1, \dots, \boldsymbol{m}_{n-1} \in \mathbb{R}^2\). 
	Then a vector \(\boldsymbol{x} = (x_0,x_1, \dots, x_{n-1})^T \in V\) if and only if
	\begin{align}\label{eq_th1}
		\sum_{j=0}^{n-1} x_j \boldsymbol{m}_j = \boldsymbol{0}.
	\end{align}
	Set \(c = \frac{n}{2} - 1\). Since \(n \ge 4\), we have \(c \ge 1\).
	
	Assume, in contradiction, that there exists a \((n-2)\)-dimensional subspace \(V\) of \(\mathbb{R}^n\) such that no nonzero vector in \(V\) satisfies the required ratio, i.e., for every nonzero \(\boldsymbol{x} \in V\) and $i\in[0,n-1]$, there is no coordinate \(x_i\) with
	\begin{align}\label{eq_th2}
		|x_i| \ge c \cdot \max_{0\leq j \neq i\leq n-1} |x_j|.
	\end{align}
	
	\textbf{Part 2: Construction of zonotopes.}
	For each index \(i \in [0,n-1]\), define the set
	\begin{align}\label{eq_th3}
		Z_i =\left\{ \sum_{j\neq i\in[0,n-1]} \alpha_j \boldsymbol{m}_j \;|\; -1\leq \alpha_j\leq 1,\ \forall j\neq i\in[0,n-1]
		\right\} .
	\end{align}
	Each \(Z_i\subset \mathbb{R}^2\) is a centrally symmetric convex polygon (a zonotope) generated by the vectors \(\{m_j\}_{j\neq 1\in[0,n-1]}\).
	
	We claim that according to the contradiction hypothesis, \(c \boldsymbol{m}_i \notin Z_i\) for each \(i\in[0,n-1]\). Indeed, suppose \(c \boldsymbol{m}_i \in Z_i\). Then there exist coefficients \(\{\alpha_j \in [-1, 1]\}_{j\neq i\in[0,n-1]}\) such that
	\begin{align*}
		c \boldsymbol{m}_i = \sum_{j \neq i\in[0,n-1]} \alpha_j \boldsymbol{m}_j.
	\end{align*}
	For each $i\in[0,n-1]$, define a vector \(\boldsymbol{x}=(x_0,x_1,\ldots,x_{n-1})\in\mathbb{R}^{n}\) by \(x_i = c\) and \(x_j = -\alpha_j\) for \(j \neq i\in[0,n-1]\). Then
	\begin{align*}
		\sum_{k=1}^n x_k \boldsymbol{m}_k = c \boldsymbol{m}_i + \sum_{j \neq i\in[0,n-1]} (-\alpha_j)\cdot \boldsymbol{m}_j = \boldsymbol{0},
	\end{align*}
	which means \(\boldsymbol{x} \in V\) (Eq. \eqref{eq_th1}). 
	Since \(c \ge 1 \neq 0\), \(\boldsymbol{m}\) is nonzero. Moreover, \(|x_i| = c\) and for \(j \neq i\in[0,n-1]\), \(|x_j| = |\alpha_j| \le 1\). Hence, the ratio of the largest absolute value (at least \(c\)) to the second largest (at most \(1\)) is at least \(c\), contradicting the assumption (Eq. \eqref{eq_th2}). Therefore, \(c \boldsymbol{m}_i \notin Z_i\) for all \(i\in[0,n-1]\).
	
	\textbf{Part 3: Application of the hyperplane separation theorem.}
	Since \(c m_i\) is outside the closed convex set \(Z_i\) for each $i\in[0,n-1]$, according to Theorem \ref{Th.00} (the hyperplane separation theorem), there exists a nonzero vector \(\boldsymbol{u}_i \in \mathbb{R}^2\) (a linear functional) such that
	\begin{align}\label{eq_thD}
		\langle \boldsymbol{u}_i,c \boldsymbol{m}_i \rangle> \max_{\boldsymbol{z} \in Z_i} \langle \boldsymbol{u}_i,\boldsymbol{z} \rangle.
	\end{align}
	In Fig. \ref{fig:1}, we provide a geometric illustration of the meaning of Eq. \eqref{eq_thD}.
	
	\begin{figure}[htpb]
		\centering
		\includegraphics[width=0.37\linewidth]{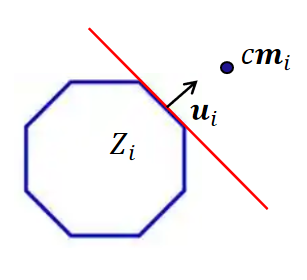}
		\caption{A geometric illustration of the meaning of Eq. \eqref{eq_thD}.}
		\label{fig:1}
	\end{figure}
	
	By the definition of \(Z_i\), we have
	\begin{align}\label{eq_th4}
		\max_{z \in Z_i} \langle \boldsymbol{u}_i,\boldsymbol{z} \rangle = \max_{|\alpha_j|\leq 1,\, j \neq i \in [0,n-1]} \left( \sum_{j \neq i \in [0,n-1]} \alpha_j \langle \boldsymbol{u}_i, \boldsymbol{m}_j \rangle \right)=\sum_{j \neq i\in[0,n-1]} |\langle \boldsymbol{u}_i,\boldsymbol{m}_j \rangle|.
	\end{align}
	Thus,
	\begin{align}\label{eq_th5}
		c \cdot \langle \boldsymbol{u}_i, \boldsymbol{m}_i \rangle > \sum_{j \neq i\in[0,n-1]} |\langle \boldsymbol{u}_i,\boldsymbol{m}_j \rangle|.
	\end{align}
	Adding \(\langle \boldsymbol{u}_i, \boldsymbol{m}_i \rangle\) to both sides of Eq. \eqref{eq_th5} (note that the left side is positive, hence \(\langle \boldsymbol{u}_i, \boldsymbol{m}_i \rangle > 0\) and $\langle \boldsymbol{u}_i, \boldsymbol{m}_i \rangle = |\langle \boldsymbol{u}_i, \boldsymbol{m}_i \rangle|$ for each $i\in[0,n-1]$), we obtain
	\begin{align*}
		(c+1)\cdot\langle \boldsymbol{u}_i, \boldsymbol{m}_i \rangle > \sum_{j=0}^{n-1} |\langle \boldsymbol{u}_i, \boldsymbol{m}_j \rangle|.
	\end{align*}
	Without loss of generality, for each $i\in[0,n-1]$, we may scale \(\boldsymbol{u}_i\) so that
	\begin{align}\label{eq_thE}
		\sum_{j=0}^{n-1} |\langle \boldsymbol{u}_i, \boldsymbol{m}_j \rangle| = 1.
	\end{align}
	Note that this is justified since $\langle \boldsymbol{u}_i, \boldsymbol{m}_i \rangle>0$. Then we have
	\begin{align}\label{eq_th7}
		\langle \boldsymbol{u}_i, \boldsymbol{m}_i \rangle> \frac{1}{c+1}= \frac{2}{n}.
	\end{align}
	Summing Eq. \eqref{eq_th7} over \(i = 0,1 \ldots, n-1\) gives
	\begin{align}\label{eq_th8}
		\sum_{i=0}^{n-1} \langle \boldsymbol{u}_i, \boldsymbol{m}_i \rangle > n \cdot \frac{2}{n} = 2.
	\end{align}
	Next, we will show that the left-hand side of Eq. \eqref{eq_th8} is less than or equal to 2, thereby deriving a contradiction.
	
	\textbf{Part 4: A trace bound.}
	Construct a \(n \times n\) real matrix \(A=(A_{i,j})_{i,j\in[0,n-1]}\) by \(A_{i,j} = \langle \boldsymbol{u}_i, \boldsymbol{m}_j \rangle\). according to the construction, matrix $A$ has factorization
	\begin{align}\label{eq_th9}
		A= U \cdot M^T,
	\end{align}
	where \(U\) is the \(n \times 2\) matrix and \(M^T\) is the \(2 \times n\) matrix as follows
	 \begin{align*}
	 	U = (\boldsymbol{u}_0,\boldsymbol{u}_1,\ldots,\boldsymbol{u}_{n-1})^T, \ M = (\boldsymbol{m}_0,\boldsymbol{m}_1,\ldots,\boldsymbol{m}_{n-1}).
	 \end{align*}
	Consequently, \(\rank(A) \leq \rank(U)  \leq 2\).
	
	On the other hand, according to our scaling condition Eq.~\eqref{eq_thE},
	\[
	\sum_{j=0}^{n-1} |A_{i,j}| = \sum_{j=0}^{n-1} |\langle \boldsymbol{u}_i, \boldsymbol{m}_j \rangle| = 1 \quad \text{for each } i\in[0,n-1].
	\]
	Hence, we have \(||A||_\infty = 1\).
	According to Lemma \ref{lem3.1}, we have 
	\begin{align}\label{eq_th10}
		\operatorname{Tr}(A) = \sum_{i=0}^{n-1}A_{i,i}=\sum_{i=0}^{n-1} \langle \boldsymbol{u}_i, \boldsymbol{m}_i \rangle \leq 2.
	\end{align}
	Combining Eq. \eqref{eq_th8} and Eq. \eqref{eq_th10}, we obtain a contradiction. This completes the main result of the theorem.
	
	Finally, we show that \(c=(n/2)-1\) is tight for the sake of completeness.
	
	\textbf{Part 5: Tightness construction.}
	Let $n$ be an even integer with $n\geq 4$. Define the subspace
	\begin{align}\label{eq_th11}
		V_0 \!=\! \left\{ \boldsymbol{x}=(x_{i})_{i=0}^{n-1} \in \mathbb{R}^n \mid \sum_{i=0}^{n/2-1} x_i = 0,\ \sum_{i=n/2}^{n-1} x_i = 0 \right\}.
	\end{align}
	
	Then the dimension of $V_0$ is $n-2$. Consider the vector
	\begin{align*}
		\boldsymbol{x}^* = \bigl(n/2-1,\underbrace{-1,\dots,-1}_{n/2-1\text{ times}},0,\dots,0\bigr) \in V_0.
	\end{align*}
	Clearly $\boldsymbol{x}^*\in V_0$. Moreover, $\max_{i\in[0,n-1]}⁡|x_i|=n/2-1$ and the second largest absolute value is 1; hence the ratio is $c=n/2-1$.
	We now show that for any nonzero vector in $V_0$, the ratio does not exceed $c$ (i.e., $c$ is the supremum).
	
	Take any \(\boldsymbol{x}\neq \boldsymbol{0} \in V_0\). 
	The largest absolute value may occur in either the first half ($\{x_i\}_{i=0}^{n/2-1}$) or the second half ($\{x_i\}_{i=n/2}^{n-1}$).
	Without loss of generality, suppose that the maximum absolute value is \(|x_t|\) for some \(t\in[0,n/2-1]\) (the symmetric case is analogous). 
	Since \(\sum_{i=0}^{n/2-1} x_i = 0\), there exists an index \(j \neq t\in[0,n/2-1]\) such that
	\[
	|x_j| \ge \frac{|x_t|}{n/2 - 1}.
	\]
	Indeed, if all other entries had an absolute value strictly less than \(\frac{|x_t|}{n/2-1}\), then
	we have 
	\begin{align*}
		\sum_{j \neq t\in[0,n/2-1]}|x_j|<(n/2 - 1)\cdot \frac{|x_t|}{n/2 - 1}=|x_t|, 
	\end{align*}
	which contradicts (see Eq. \eqref{eq_th11})
	\begin{align*}
		\sum_{j \neq t\in[0,n/2-1]}|x_j|\geq \left|\sum_{j \neq t\in[0,n/2-1]}x_j\right|=|x_t|.
	\end{align*}
	Therefore, the second largest absolute value is at least \(\frac{|x_t|}{n/2-1}\), and consequently the ratio between the largest and the second largest absolute values is at most \(n/2-1 = c\). 
	Therefore, for all nonzero vectors in \(V_0\), the ratio is no more than \(c\), and the vector \(x^*\) achieves equality.
	
	Thus, the constant \(c = n/2 - 1\) is optimal (tight).
	This completes the proof of the theorem.

\end{proof}

\section{The Proof of \textbf{Problem A} with $n-k|k$}
\label{sec:4}

\subsection{Proof of \textbf{Problem A}}

In this section, we show that, when $n-k|k$, the lower bound in \textbf{Problem A} is tight.
First, we generalize Lemma \ref{lem3.1} as follows.

\begin{lemma}\label{lem4.1}
	For a real matrix \( A\in\mathbb{R}^{n\times n} \) and a positive integer $r\leq n$, if \( ||A||_{\infty} \leq 1 \) and \( \rank(A) \leq r \), then \( \operatorname{Tr}(A) \leq r \).
\end{lemma}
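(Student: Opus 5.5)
The plan mirrors the proof of Lemma \ref{lem3.1}: express the trace as the sum of eigenvalues and bound each eigenvalue in modulus by the induced $\infty$-norm.

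\textbf{Step 1: count the nonzero eigenvalues.} Over $\mathbb{C}$, the characteristic polynomial of $A$ factors as $\prod_{t=1}^{n}(x-\lambda_t)$ with $\operatorname{Tr}(A)=\sum_{t=1}^n\lambda_t$. Since $\rank(A)\le r$, the kernel of $A$ has dimension at least $n-r$. Hence $0$ has geometric multiplicity at least $n-r$, and so algebraic multiplicity at least $n-r$. Therefore at most $r$ of the $\lambda_t$ (counted with algebraic multiplicity) are nonzero. Call them $\lambda_1,\dots,\lambda_s$ with $s\le r$; we get
\[
\operatorname{Tr}(A)=\sum_{t=1}^{s}\lambda_t .
\]
The only subtle point is this passage from rank to algebraic multiplicity, and the geometric-versus-algebraic multiplicity inequality handles it.

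\textbf{Step 2: bound each eigenvalue.} For each $t$, take a complex eigenvector $\boldsymbol{v}_t\in\mathbb{C}^n$ with $A\boldsymbol{v}_t=\lambda_t\boldsymbol{v}_t$. The row-sum norm $\|A\|_\infty$ is the operator norm induced by $\|\cdot\|_\infty$ on $\mathbb{C}^n$ as well, because the triangle inequality for complex moduli gives $|(A\boldsymbol{v})_i|\le\sum_j|a_{i,j}|\,|v_j|$. Hence
\[
|\lambda_t|\,\|\boldsymbol{v}_t\|_\infty=\|A\boldsymbol{v}_t\|_\infty\le\|A\|_\infty\|\boldsymbol{v}_t\|_\infty\le\|\boldsymbol{v}_t\|_\infty,
\]
so $|\lambda_t|\le1$.

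\textbf{Step 3: conclude.} Because $A$ is real, $\operatorname{Tr}(A)$ is real, and the non-real eigenvalues come in conjugate pairs. So
\[
\operatorname{Tr}(A)=\operatorname{Re}\sum_{t=1}^s\lambda_t\le\sum_{t=1}^s|\lambda_t|\le s\le r.
\]
Taking real parts avoids the case split on real versus complex-conjugate eigenvalues used in Lemma \ref{lem3.1}.
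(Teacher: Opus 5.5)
Your proof is correct and follows the same route as the paper: write $\operatorname{Tr}(A)$ as the sum of at most $r$ nonzero eigenvalues, bound each in modulus by $\|A\|_\infty\le 1$, and conclude. Your version is slightly tidier. Taking real parts replaces the paper's split into $r_1$ real eigenvalues and $r_2$ conjugate pairs, and you explicitly justify both the rank-to-algebraic-multiplicity step and the use of complex eigenvectors, which the paper glosses over (its Lemma~\ref{lem3.1} takes the eigenvectors in $\mathbb{R}^n$).
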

\begin{proof}
	Since $\rank(A) \leq r$, the matrix $A$ has at most $r$ nonzero eigenvalues (counting algebraic multiplicities) in $\mathbb{C}$. 
	
	Note that any nonreal eigenvalue of $A$ must appear in complex conjugate pairs.
	Suppose that $A$ has $r_1$ real eigenvalues: $\mu_1, \mu_2, \dots, \mu_{r_1}$ and $r_2$ pairs of complex conjugate eigenvalues: $\alpha_1 \pm i\beta_1,\; \alpha_2 \pm i\beta_2,\; \dots,\; \alpha_{r_2} \pm i\beta_{r_2}$.
	And we have $r_1+2r_2\leq r$.
	
	According to the proof of Lemma~\ref{lem3.1}, for any $i\in[1,r_1]$ and $j\in[1,r_2]$, we have 
	\begin{align*}
	 |\mu_i|\leq 1, \ |\alpha_j \pm i\beta_j|=\sqrt{\alpha_j^2+\beta_j^2}\leq 1.
	\end{align*}
	Then we $\alpha_j\leq 1$ for $j\in[1,r_2]$ and 
	\begin{align*}
		\operatorname{Tr}(A)&=
		\sum_{i=1}^{r_1}\mu_i+\sum_{j=1}^{r_2}\left((\alpha_j+i \beta_j)+(\alpha_j-i \beta_j)\right)\nonumber\\
		&=\sum_{i=1}^{r_1}\mu_i+\sum_{j=1}^{r_2}(2\cdot \alpha_j)\leq r_1+2r_2\leq r.
	\end{align*}
\end{proof}

Now, we present the proof of the \textbf{Problem A} with $n-k|k$ in the following theorem.

In the statement of Theorem \ref{th.main2}, according to the definition of \( \mathsf{h}_1(\mathcal{C}) \) for $[n,k]$ linear code $\mathcal{C}$, we describe the conclusion in the language of mathematics (similar to \textbf{Problem B}).
The general idea of the proof is similar to that of Theorem \ref{th.main}.

\begin{theorem}\label{th.main2}
	For $k> n-k\geq 2$, every $k$-dimensional subspace of $\mathbb{R}^n$ contains a nonzero vector in which the ratio between the largest and second largest absolute values of its entries is at least $\frac{k}{n-k}$.
	Moreover, this bound is tight.
\end{theorem}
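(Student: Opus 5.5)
The plan has two parts. The first is the lower bound $\mathsf{h}_1\ge \frac{k}{n-k}$, which I would obtain by running the argument of Theorem~\ref{th.main} with redundancy $r:=n-k$ in place of $2$. The second is tightness. Here I would first settle the case $r\mid k$ by an explicit block construction. I would then try to show that the constant $\frac{k}{n-k}$ cannot be improved for any admissible $(n,k)$.

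\textbf{Lower bound.} Let $V$ be a $k$-dimensional subspace, written as $V=\ker M$ with $M=(\boldsymbol m_0,\dots,\boldsymbol m_{n-1})\in\mathbb R^{r\times n}$ of rank $r$. Set $c=\frac{k}{r}$ and suppose, for contradiction, that no nonzero $\boldsymbol x\in V$ has a coordinate with $|x_i|\ge c\max_{j\ne i}|x_j|$.

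\begin{enumerate}
\item Part 2 of Theorem~\ref{th.main} carries over verbatim. It gives $c\,\boldsymbol m_i\notin Z_i$, where $Z_i$ is the zonotope in $\mathbb R^{r}$ generated by $\{\boldsymbol m_j\}_{j\ne i}$.
\item Theorem~\ref{Th.00} then gives a vector $\boldsymbol u_i\in\mathbb R^r$ with
\[
c\langle\boldsymbol u_i,\boldsymbol m_i\rangle>\sum_{j\ne i}|\langle \boldsymbol u_i,\boldsymbol m_j\rangle| .
\]
\item Normalize so that $\sum_j|\langle\boldsymbol u_i,\boldsymbol m_j\rangle|=1$. Then $\langle\boldsymbol u_i,\boldsymbol m_i\rangle>\frac1{c+1}=\frac rn$.
\item Let $A=UM^T$. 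This matrix has $\|A\|_\infty=1$ and $\rank(A)\le r$. Lemma~\ref{lem4.1} then gives $\operatorname{Tr}(A)\le r$. This contradicts $\operatorname{Tr}(A)>n\cdot\frac rn=r$.
\end{enumerate}

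None of these steps uses $r=2$ or divisibility, so the inequality holds for all $k>n-k\ge2$.

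\textbf{Tightness when $r\mid k$.} Partition $[0,n-1]$ into $r$ blocks of size $n/r=\frac kr+1$. Let $V_0$ consist of the vectors whose coordinates sum to zero on each block; then $\dim V_0=k$. As in Part 5 of Theorem~\ref{th.main}, the largest entry lies in some block of size $\frac kr+1$. The remaining $\frac kr$ entries of that block sum to its negative, so some other entry has absolute value at least $|x_t|\cdot\frac rk$. Hence every ratio is at most $\frac kr$. The vector $(\frac kr,-1,\dots,-1,0,\dots,0)$ attains this value.

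\textbf{Tightness in general, and the main obstacle.} For $r\nmid k$, the block construction with unequal blocks only gives $\lceil k/r\rceil$. To reach exactly $\frac kr$, I would look for $M$ for which every inequality in the lower-bound argument is an equality simultaneously. This requires:
\begin{itemize}
\item every $c\,\boldsymbol m_i$ lies on $\partial Z_i$;
\item the corresponding $A=UM^T$ has rank $r$, row $\ell_1$-norms $1$, all diagonal entries equal to $\frac rn$, and all nonzero eigenvalues equal to $1$.
\end{itemize}
The last condition suggests taking $A$ to be a rank-$r$ idempotent with constant diagonal. The first thing I would try is a symmetric (harmonic or cyclic) frame. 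For example, when $r=2$, take $\boldsymbol m_j=(\cos\frac{\pi j}{n},\sin\frac{\pi j}{n})^T$. By rotational symmetry, all $n$ columns then play the same role, so it suffices to check one column. One must then show that the support-function inequality $c\langle \boldsymbol u,\boldsymbol m_0\rangle\le\sum_{j\ne0}|\langle\boldsymbol u,\boldsymbol m_j\rangle|$ holds for all $\boldsymbol u$, with equality at $c=\frac kr$. This verification is the step I expect to be hardest; whether such a frame really attains $\frac kr$ for $r\nmid k$ is uncertain.

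If no exact construction exists for a given $(n,k)$, the fallback reading of ``tight'' is that no larger universal function of $(n,k)$ can replace $\frac{k}{n-k}$. This follows from the $r\mid k$ constructions, which attain equality for infinitely many parameter pairs and every ratio $\frac kr\in\mathbb Z$.
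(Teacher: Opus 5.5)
Your lower bound (Parts 2--4 of Theorem~\ref{th.main} run with redundancy $r=n-k$ and closed by Lemma~\ref{lem4.1}) and your sum-zero-on-blocks subspace for $r\mid k$ are exactly the paper's proof. The paper's tightness argument also needs $r\mid k$: its blocks have length $n/(n-k)$, and the section is headed ``with $n-k\mid k$''. So the right move is to attach that hypothesis to the ``moreover'' clause. Your final paragraph, which tries to avoid the hypothesis, should be dropped, because neither of its routes can work.

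\textbf{The harmonic frame fails outright.} Take $\boldsymbol u=(1,s)^T$. The function $s\mapsto\sum_{j=1}^{n-1}\bigl|\cos\frac{\pi j}{n}+s\sin\frac{\pi j}{n}\bigr|$ is convex and even (pair $j$ with $n-j$), so its minimum is at $s=0$. Hence the frame has $\mathsf{h}_1=\sum_{j=1}^{n-1}|\cos\frac{\pi j}{n}|\sim 2n/\pi$. For $n=5$ this is $\sqrt5$, worse than the target $3/2$ and even worse than the unequal-block value $2$.

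\textbf{Your own equality analysis shows $k/r$ is never attained when $r\nmid k$.} Suppose $\mathsf{h}_1(V)=k/r$. Take each $\boldsymbol u_i$ optimal, i.e.\ minimizing $\sum_{j\ne i}|\langle\boldsymbol u,\boldsymbol m_j\rangle|$ subject to $\langle\boldsymbol u,\boldsymbol m_i\rangle=1$; the minimum equals $\max\{t: t\boldsymbol m_i\in Z_i\}\le k/r$. After normalizing, $A_{ii}\ge r/n$, so Lemma~\ref{lem4.1} is tight and $A$ has exactly $r$ eigenvalues equal to $1$. Because $\|A^m\|_\infty\le 1$ for all $m$, this eigenvalue is semisimple, so $A^2=A$. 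Idempotence is therefore forced, not merely suggested. Then $A^T$ is a norm-one projection of $\ell_1^n$ onto $\mathrm{row}(A)=V^\perp$. By the classical description of such ranges (isometric to $\ell_1^r$, hence spanned by disjointly supported vectors), either $V$ contains some $\boldsymbol e_i$, or $V$ is a direct sum of hyperplanes on blocks of sizes $s_1+\dots+s_r=n$. In the latter case Roth's $[s,s-1]$ bound (the case $r=1$ of your argument) gives
$\mathsf{h}_1(V)\ge\max_\ell s_\ell-1\ge\lceil n/r\rceil-1=\lceil k/r\rceil>k/r$,
a contradiction. Since $V\mapsto\mathsf{h}_1(V)$ is lower semicontinuous on the compact Grassmannian, it follows that $\min_V\mathsf{h}_1(V)>k/r$ whenever $r\nmid k$.

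\textbf{The fallback reading proves nothing at such $(n,k)$.} Equality at parameters with $r\mid k$ shows only that the formula $k/(n-k)$ cannot be improved at those parameters. It says nothing about a given $(n,k)$ with $r\nmid k$, and there the constant is in fact not sharp. Keep the first two parts of your proposal and state the divisibility hypothesis for tightness explicitly.
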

\begin{proof}
	
	Let $V$ be an $k$-dimension subspace of $\mathbb{R}^n$. Then $V$ can be represented as the kernel of a $(n-k)\times n$ matrix $M$ of rank $n-k$.
	Let $M=[\boldsymbol{m_0},\boldsymbol{m_1},\ldots,\boldsymbol{m_{n-1}}]$, where $\boldsymbol{m}_j\in\mathbb{R}^{n-k}$ for $j\in[0,n-1]$. Then we have
	\begin{align*}
		V=\left\{\boldsymbol{x}=(x_0,x_1,\ldots,x_{n-1})^T\in\mathbb{R}^n \mid \sum_{j=0}^{n-1} x_j \boldsymbol{m}_j = \boldsymbol{0}.\right\}.
	\end{align*}
	
	Let $c=\frac{k}{n-k}>1$. By contradiction, there exists a $V$ such that, for every nonzero \(\boldsymbol{x} \in V\) and $i\in[0,n-1]$, there is no coordinate \(x_i\) with
	\begin{align*}
		|x_i| \ge c \cdot \max_{0\leq j \neq i\leq n-1} |x_j|.
	\end{align*}
	
	We can construct a family of zonotopes $\{Z_i\subset \mathbb{R}^{n-k}\}_{i\in[0,n-1]}$ similar to Eq. \eqref{eq_th3}:
	\begin{align*}
		Z_i =\left\{ \sum_{j\neq i\in[0,n-1]} \alpha_j \boldsymbol{m}_j \;|\; -1\leq \alpha_j\leq 1,\ \forall j\neq i\in[0,n-1]
		\right\} .
	\end{align*}
	Similarly to the analysis in \textbf{Part 2} of Theorem \ref{th.main}, we can obtain (here we omit the derivation process):
	\begin{align*}
		c\cdot \boldsymbol{m}_i \notin Z_i, \ \forall i\in[0,n-1].
	\end{align*}
	Then, by Theorem \ref{Th.00} (the hyperplane separation theorem), for every \( i\in[0,n-1] \), there exists \( \mathbf{u}_i\in\mathbb{R}^{n-k} \) such that:
	\begin{align*}
		\langle \boldsymbol{u}_i,c \boldsymbol{m}_i \rangle> \max_{\boldsymbol{z} \in Z_i} \langle \boldsymbol{u}_i,\boldsymbol{z} \rangle.
	\end{align*}
	
	Similarly to Eq. \eqref{eq_th4}, we have:
	\begin{align*}
		\max_{z \in Z_i} \langle \boldsymbol{u}_i,\boldsymbol{z} \rangle=\sum_{j \neq i\in[0,n-1]} |\langle \boldsymbol{u}_i,\boldsymbol{m}_j \rangle|.
	\end{align*}
	Then we have (note that, similar to Eq. \eqref{eq_thE}, for each $i\in[0,n-1]$, we can also scale  \(\boldsymbol{u}_i\) so that
	$\sum_{j=0}^{n-1} |\langle \boldsymbol{u}_i, \boldsymbol{m}_j \rangle| = 1$)
	\begin{align}\label{eq_th2_7}
		&c \cdot \langle \boldsymbol{u}_i, \boldsymbol{m}_i \rangle > \sum_{j \neq i\in[0,n-1]} |\langle \boldsymbol{u}_i,\boldsymbol{m}_j \rangle|\nonumber\\
		&\Rightarrow(c+1)\cdot\langle \boldsymbol{u}_i, \boldsymbol{m}_i \rangle > \sum_{j=0}^{n-1} |\langle \boldsymbol{u}_i, \boldsymbol{m}_j \rangle|=1\nonumber\\
		&\Rightarrow \langle \boldsymbol{u}_i, \boldsymbol{m}_i \rangle>\frac{1}{c+1} \Rightarrow \sum_{i=0}^{n-1}\langle \boldsymbol{u}_i, \boldsymbol{m}_i \rangle>n\cdot \frac{1}{c+1}=n-k.
	\end{align}
	
	On the other hand, we can construct a matrix \( A=(A_{i,j})_{i,j\in[0,n-1]}\in\mathbb{R}^{n\times n} \), where $A_{i,j}=\langle \boldsymbol{u}_i, \boldsymbol{m}_j \rangle.$
	Note that 
	\begin{align*}
			A= (\boldsymbol{u}_0,\boldsymbol{u}_1,\ldots,\boldsymbol{u}_{n-1})^T\cdot (\boldsymbol{m}_0,\boldsymbol{m}_1,\ldots,\boldsymbol{m}_{n-1}).
	\end{align*}
	Then we have $\rank(A)\leq n-k$.
	Regarding the infinity norm of \( A \), since
	\begin{align*}
		\sum_{j=0}^{n-1}|A_{i,j}|=\sum_{j=0}^{n-1} |\langle \boldsymbol{u}_i, \boldsymbol{m}_j \rangle|=1, \ \forall i\in[0,n-1],
	\end{align*} 
	we have \( ||A||_{\infty}=1 \).
	According to Lemma \ref{lem4.1}, we have 
	\begin{align*}
		\operatorname{Tr}(A) = \sum_{i=0}^{n-1}A_{i,i}=\sum_{i=0}^{n-1} \langle \boldsymbol{u}_i, \boldsymbol{m}_i \rangle \leq n-k,
	\end{align*}
	which contradicts Eq. \eqref{eq_th2_7}. 
	This completes the main result of the theorem.
	
	Next, we show that \(c=\frac{k}{n-k}\) is tight for the sake of completeness.
	In $\mathbb{R}^n$, we construct the subspace \( V_0 \) as follows:
	\begin{align*}
		V_0 = \bigg\{ \boldsymbol{x}=(x_{i})_{i\in[0,n-1]} \in \mathbb{R}^n \mid \sum_{j\in[0,\frac{k}{n-k}]} x_{i\cdot \frac{n}{n-k}+j} = 0,\ \forall i\in[0,n-k-1] \bigg\}.
	\end{align*}
	Clearly, \( V_0 \) is a \( k \)-dimensional subspace of $\mathbb{R}^n$.
	Consider the vector
	\begin{align*}
		\boldsymbol{x}^* = \bigl(\frac{k}{n-k},\underbrace{-1,\dots,-1}_{\frac{k}{n-k}\text{ times}},0,\dots,0\bigr) \in V_0.
	\end{align*}
	Then we have $\boldsymbol{x}^*\neq \boldsymbol{0}\in V_0$ and $\mathsf{h}_1(\boldsymbol{x})=\frac{k}{n-k}$.
	
	On the other hand, for any $\boldsymbol{x}\neq \boldsymbol{0}\in V_0$, suppose that $|x_{t}|=\max_{i\in[0,n-1]}|x_i|>0$ and $t=a\cdot \frac{n}{n-k} +b$, where $a\in[0,n-k-1]$ and $b\in[0,\frac{k}{n-k}]$.
	Then we have 
	\begin{align*}
		|x_t|=\left|\sum_{a\cdot \frac{n}{n-k}\leq j\neq t\leq a\cdot \frac{n}{n-k}+\frac{k}{n-k}}x_j\right|\leq \sum_{a\cdot \frac{n}{n-k}\leq j\neq t\leq a\cdot \frac{n}{n-k}+\frac{k}{n-k}}|x_j|\leq \frac{k}{n-k}\cdot|x_{t'}|, 
	\end{align*}
	where $t':=\operatorname{argmax}_{a\cdot \frac{n}{n-k}\leq j\neq t\leq a\cdot \frac{n}{n-k}+\frac{k}{n-k}}|x_j|$.
	Therefore, we have 
	\begin{align*}
		\mathsf{h}_1(\boldsymbol{x})\leq \frac{|x_t|}{|x_{t'}|}\leq \frac{k}{n-k}, \forall \boldsymbol{x}\neq \boldsymbol{0}\in V_0.
	\end{align*}
	In summary, the constant \(c = \frac{k}{n-k}\) is optimal (tight) and the proof of the theorem is complete.
\end{proof}

\subsection{Discussion}\label{sec:4.4}
\textbf{Discussion on the constructions.} In \cite[Section III.B]{AECC2020}, Roth presented an explicit construction (denoted here as code $\mathbf{C}_0$) of Analog ECC for single-error detection from a coding-theoretic perspective, and provided an analysis of the single-error detection threshold of code $\mathcal{C}_0$ through encoding and decoding algorithms, which incidentally established an upper bound for $\mathsf{h}_1(\mathbf{C}_0)$. 
For the sake of completeness in our proof, we provide corresponding subspace constructions from a mathematical perspective (i.e., the construction in  \textbf{Part 5} of Theorem \ref{th.main} and the subspace \(V_0\) in Theorem \ref{th.main2}) to analyze the tightness of the lower bounds. 
The two are essentially equivalent: one illustrates the attainability of the bound from a constructive perspective, and the other from a mathematical theory perspective.

\textbf{Connection to error detection algorithms for LLMs.} 
During the training and inference of LLMs, the real‑valued general matrix multiplication (GEMM) operation is one of the most fundamental computational primitives, such as in self-attention mechanisms. Silent data corruption (SDC) (or soft errors) caused by bit flips in GPU/CPUs during the computation process can lead to model convergence failures \cite{Gao2025BERTSoftErrors,Xue2023SoftError,Ibrahim2020,Zhai_2021}. 
In safety-critical and high-performance computing (HPC) systems, where the industry places an extreme premium on low training and inference latency, online detecting errors to prevent SDC is more critical than error correction \cite{9366780,xie2025realm,Kosaian2023}.
In modern computing, the quantization of floating-point numbers inevitably introduces rounding errors, making fully accurate error correction difficult and unnecessary; detecting and flagging anomalies suffices to ensure the stability of the training process. 

For the detection of a single SDC error, the mainstream approach is based on the partitioned Algorithm-Based Fault Tolerance (ABFT) scheme \cite{1676475,285610,Kosaian2023,liang2025attnchecker,dreee}.
Its essence lies in partitioning large matrices and designing checksums for each block to detect a single error. 
Eq. \eqref{AAA} shows an example of the partitioned ABFT for GEMM operation $A\times B$ with redundancy 2,
\begin{align}\label{AAA}
	&\begin{bmatrix}
		A_{1,1}&A_{1,2}\\
		\boldsymbol{p}_1^T A_{1,1}&\boldsymbol{p}_1^T A_{1,2}\\
		A_{2,1}&A_{2,2}\\
		\boldsymbol{p}_1^T A_{2,1}&\boldsymbol{p}_1^T A_{2,2}
	\end{bmatrix}\cdot 
	\begin{bmatrix}
		B_{1,1}&B_{1,1} \boldsymbol{p}_2&B_{1,2}&B_{1,2} \boldsymbol{p}_2\\
		B_{2,1}&B_{2,1} \boldsymbol{p}_2&B_{2,2}&B_{2,2} \boldsymbol{p}_2
	\end{bmatrix}\nonumber\\
	&=\begin{bmatrix}
		C_{1,1}&C_{1,2}\boldsymbol{p}_2&C_{1,2}&C_{1,2}\boldsymbol{p}_2\\
		\boldsymbol{p}_1^T C_{1,1}&\boldsymbol{p}_1^T C_{1,1}\boldsymbol{p}_2&\boldsymbol{p}_1^T C_{1,2}&\boldsymbol{p}_1^T C_{1,2}\boldsymbol{p}_2\\
		C_{2,1}&C_{2,1}\boldsymbol{p}_2&C_{2,2}&C_{2,2}\boldsymbol{p}_2\\
		\boldsymbol{p}_1^T C_{2,1}&\boldsymbol{p}_1^T C_{2,1}\boldsymbol{p}_2&\boldsymbol{p}_1^T C_{2,2}&\boldsymbol{p}_1^T C_{2,2}\boldsymbol{p}_2
	\end{bmatrix},
\end{align} 
where $A\in\mathbb{R}^{m\times \ell}$ is partitioned into $A_{1,1},A_{1,2},A_{2,1},A_{2,2}\in\mathbb{R}^{\frac{m}{2}\times \frac{\ell}{2}}$, with redundant $\{\boldsymbol{p}_1^T A_{i,j}\}_{i,j\in\{1,2\}}$ added to each block respectively; and $B\in\mathbb{R}^{\ell\times n}$ is partitioned into $B_{1,1},B_{1,2},B_{1,3},B_{1,4}\in\mathbb{R}^{\frac{\ell}{2}\times \frac{n}{2}}$, with redundant $\{B_{i,j} \boldsymbol{p}_2\}_{i,j\in\{1,2\}}$ added to each block respectively. 
Here, $\boldsymbol{p}_1=(1,1,\ldots,1)^T\in\mathbb{R}^{\frac{m}{2}}$, $\boldsymbol{p}_2=(1,1,\ldots,1)^T\in\mathbb{R}^{\frac{n}{2}}$, and $C_{i,j}=\sum_{v=1}^{2}A_{i,v}B_{v,j}$ with $i,j\in\{1,2\}$.
For the resulting matrix, each column can be regarded as a codeword of a $[m+2,m]$ linear code and each row can be regarded as a codeword of a $[n+2,n]$ linear code, which enables single-error detection for the corresponding column or row.
Similar to the analysis in \textbf{Part 5} of Theorem \ref{th.main}, it can be shown that either the $[m+2,m]$ linear code or the $[n+2,n]$ linear code is an optimal Analog ECC for single-error detection with redundancy 2.
Theorem \ref{th.main2} in this paper, within the framework of Analog ECCs, reveals at the level of coding theory the optimality of the partitioned ABFT algorithm that is widely used in the field of LLMs for single-error detection.

\textbf{Further discussion.} A natural question arising from Theorem \ref{th.main2} is whether the divisibility condition \(n-k \mid k\) is essential for the tightness of the lower bound \(\lceil k/(n-k)\rceil\). 
Our construction in the proof of Theorem \ref{th.main2} explicitly partitions the \(n\) coordinates into \(n-k\) blocks, each of size \(n/(n-k)\), which requires that \(n-k\) divides \(k\). When this condition does not hold, the ceiling function \(\lceil k/(n-k)\rceil\) in \textbf{Problem A} is strictly larger than \(k/(n-k)\). 
It remains an open problem whether the bound \(\lceil k/(n-k)\rceil\) is still attainable in such cases. 
For the case where $n-k\nmid k$, it is straightforward to verify that the proof of Theorem \ref{th.main2} remains directly applicable. Consequently, for any $(n-2)$-dimensional subspace $V \subseteq \mathbb{R}^n$, there necessarily exists a nonzero vector $\boldsymbol{x} \in V$ such that $\mathsf{h}_1(\boldsymbol{x}) \geq k/(n-k)$. However, the tightness of this constant $k/(n-k)$ cannot be guaranteed. 
Combined with the construction of the code $\mathcal{C}_0$ in \cite{AECC2020}, we can obtain that, when $n-k\nmid k$, the gap between the currently obtained lower bound $c=k/(n-k)$ and the tight lower bound (denoted by $c'$) satisfies:
$$|c-c'|\leq \big|k/(n-k)-\lceil k/(n-k)\rceil\big|<1,$$ 
which is merely a constant offset. 
Therefore, as the code length $n \to \infty$ under the condition $n-k\nmid k$, the code $\mathcal{C}_0$ is asymptotically optimal with constant redundancy $n-k$.

We conjecture that $\lceil k/(n-k)\rceil$ is tight for all parameter pairs \((n,k)\) with \(k > n-k \ge 2\), although a deeper theory of convex analysis and functional analysis may be required. 
For example, the separation phenomenon of a point $c\cdot \boldsymbol{x}\in\mathbb{R}^n$ ($c>1$) and a zonotope $Z_j\subset \mathbb{R}^n$ illustrated in Fig. \ref{fig:2} can be viewed from the perspective of the Minkowski functional $(||\cdot||_{Z_j})$ of $Z_j$, which is defined as follows: 
\begin{align*}
	||\boldsymbol{x}||_{Z_j}=\inf\left\{\lambda>0 \mid \lambda\cdot \boldsymbol{x}\in Z_j\right\},
\end{align*}
and the fact that $\boldsymbol{x}$ does not belong to $Z_j$ is equivalent to $||\boldsymbol{x}||_{Z_j}>\frac{1}{c}$, and by the related theory of dual norm (in finite-dimensional spaces, the dual of the dual norm is the original norm), we have (let $||\cdot||_{Z_j}^{*}$ be the dual norm of $||\cdot||_{Z_j}$ and $||\boldsymbol{y}||_{Z_j}^{*}:=\sup_{||\boldsymbol{x}||_{Z_j}\leq 1}\langle\boldsymbol{x},\boldsymbol{y}\rangle$)  
\begin{align}\label{eq_90}
	||\boldsymbol{x}||_{Z_j}=\max_{||\boldsymbol{y}||_{Z_j}^{*}\leq 1}\langle\boldsymbol{x},\boldsymbol{y}\rangle,
\end{align}
which is closely related to optimization theory.
Since zonotopes possess favorable geometric structures, the latter expression in Eq. \eqref{eq_90} can actually be further quantified. Subsequent investigations are interesting and worthwhile mathematical problems to explore.

\section{Conclusion}
\label{sec:5}
In this paper, we resolved two open problems concerning the fundamental limits of single-error detection in analog error-correcting codes. First, we prove that for every even dimension \(n \geq 4\), any \((n-2)\)-dimensional subspace of \(\mathbb{R}^n\) must contain a nonzero vector whose largest-to-second-largest absolute entry ratio is at least \(n/2 - 1\), thereby answering \cite[Problem 2]{AECC2020} affirmatively. Second, we extended our analytical framework to the more general setting and established that the lower bound \(\lceil k/(n-k)\rceil\) in \cite[Problem 1]{AECC2020} is tight whenever \(n-k\) divides \(k\). These results close a gap in the theoretical characterization of Analog ECCs for single-error detection and provide a foundation for further investigations into higher-redundancy regimes.


\ifCLASSOPTIONcaptionsoff
  \newpage
\fi

\bibliographystyle{IEEEtran}
\bibliography{CNC-v1}

\end{document}